\pgfplotsset{compat=1.14}
\newtheorem{theorem}{Theorem}
\newtheorem{corollary}{Corollary}
\newtheorem{lemma}{Lemma}
\theoremstyle{definition}
\def \R{\mathbb{R}}
\newcommand{\nbb}{\mathbb{N}}
\newcommand{\qbb}{\mathbb{Q}}
\newcommand{\vecc}[1]{\ensuremath{\mathbf{#1}}}
\newcommand{\union}{\cup}
\newcommand{\inters}{\cap}
\newcommand{\map}{\longrightarrow}
\NewDocumentCommand\xDeclarePairedDelimiter{mmm}
 {%
  \NewDocumentCommand#1{som}{%
   \IfNoValueTF{##2}
    {\IfBooleanTF{##1}{#2##3#3}{\mleft#2##3\mright#3}}
    {\mathopen{##2#2}##3\mathclose{##2#3}}%
  }%
 }
\xDeclarePairedDelimiter\parens{(}{)}
\xDeclarePairedDelimiter\braces{\lbrace}{\rbrace}
\xDeclarePairedDelimiter\card{|}{|}
\xDeclarePairedDelimiter\floor{\lfloor}{\rfloor}
\DeclareDocumentCommand\set{om}{\braces{\IfNoValueF{#1}{#1 \,|\,} #2 }}
\DeclareDocumentCommand\resourceZero{m}{0_{#1}}
\DeclareDocumentCommand\resourceOne{m}{1_{#1}}
\DeclareDocumentCommand\costZero{m}{c_{\resourceZero{#1}}}
\DeclareDocumentCommand\costOne{m}{c_{\resourceOne{#1}}}
\title{Existence and Complexity of Approximate Equilibria\\ in Weighted Congestion Games\thanks{A preliminary version of this paper appeared in ICALP'20~\cite{cggpw2020}.
\newline \noindent
Supported by the Alexander von Humboldt Foundation with funds from the German
    Federal Ministry of Education and Research (BMBF).
Part of this work was done
    while Y.~Giannakopoulos and D.~Poças were members of the Operations Research
    group at Technical University of Munich, School of Management.
D. Poças was also funded by FCT via LASIGE Research Unit, ref. UIDB/00408/2020 and ref. UIDP/00408/2020.
       }}
\author{
		George Christodoulou\thanks{School of Informatics, Aristotle University of Thessaloniki.
		Email:
		{\tt \href{mailto:gichristo@csd.auth.gr}{\nolinkurl{gichristo@csd.auth.gr}}
		}}
	\and
		Martin Gairing\thanks{Department of Computer Science, University of Liverpool.
		Email:
		{\tt \href{mailto:gairing@liverpool.ac.uk}{\nolinkurl{gairing@liverpool.ac.uk}}
		}}
	\and
		Yiannis Giannakopoulos\thanks{Department of Data Science, Friedrich-Alexander-Universität Erlangen-Nürnberg.
		Email:
		{\tt \href{mailto:yiannis.giannakopoulos@fau.de}{\nolinkurl{yiannis.giannakopoulos@fau.de}}}}
	\and
		Diogo Poças\thanks{LASIGE, Faculdade de Ciências, Universidade de Lisboa.
		Email: {\tt \href{mailto:dmpocas@fc.ul.pt}{\nolinkurl{dmpocas@fc.ul.pt}}}}
	\and
		Clara Waldmann\thanks{Operations Research Group, Technical University of Munich.
		Email:
		{\tt \href{mailto:clara.waldmann@tum.de}{\nolinkurl{clara.waldmann@tum.de}}}}
}
\date{March 7, 2022}
\begin{document}

\maketitle
\setcounter{page}{0}
\thispagestyle{empty}

\begin{abstract}
We study the existence of approximate pure Nash equilibria ($\alpha$-PNE) in
weighted atomic congestion games with polynomial cost functions of maximum degree
$d$.
Previously it was known that $d$-PNE always exist,
while nonexistence was established only for small constants, namely for $1.153$-PNE.
We improve significantly upon this gap, proving that such games in general do not have $\tilde{\varTheta}(\sqrt{d})$-PNE, which provides the first super-constant lower bound.

Furthermore, we provide a black-box gap-introducing method of combining such
nonexistence results with a specific circuit gadget, in order to derive
NP-completeness of the decision version of the problem. In particular, deploying
this technique we are able to show that deciding whether a weighted congestion game
has an $\tilde{O}(\sqrt{d})$-PNE is NP-complete. Previous hardness results were
known only for the special case of \emph{exact} equilibria and arbitrary cost
functions.

The circuit gadget is of independent interest and it allows us to also prove
hardness for a variety of problems related to the complexity of PNE in congestion
games. For example, we demonstrate that the question of existence of $\alpha$-PNE in
which a certain set of players plays a specific strategy profile is NP-hard for any
$\alpha<3^{\nicefrac{d}{2}}$, even for \emph{unweighted} congestion games.

Finally, we study the existence of approximate equilibria in weighted congestion
games with general (nondecreasing) costs, as a function of the number of players
$n$. We show that $n$-PNE always exist, matched by an almost tight nonexistence
bound of $\tilde\varTheta(n)$ which we can again transform into an NP-completeness
proof for the decision problem.
\end{abstract}
\clearpage

\section{Introduction}\label{sec:intro}

\emph{Congestion games} constitute the standard framework to study settings where
selfish players compete over common resources. They are one of the most well-studied
classes of games within the field of \emph{algorithmic game
theory}~\citep{Roughgarden2016,2007a}, covering a wide range of applications,
including, e.g., traffic routing and load balancing.
In their most general form, each player has her own weight and the latency on each
resource is a nondecreasing function of the total weight of players that occupy it.
The cost of a player on a given outcome is just the total latency that she is
experiencing, summed over all the resources she is using.

The canonical approach to analysing such systems and predicting the behaviour of the
participants is  the ubiquitous game-theoretic tool of equilibrium analysis.
More specifically, we are interested in the \emph{pure Nash equilibria (PNE)}  of
those games; these are stable configurations from which no player would benefit from
unilaterally deviating. However, it is a well-known fact that such desirable
outcomes might not always exist, even in very simple weighted congestion games. A natural
response, especially from a computer science perspective, is to relax the solution
notion itself by considering \emph{approximate} pure Nash equilibria ($\alpha$-PNE);
these are states from which, even if a player could improve her cost by deviating,
this improvement could not be by more than a (multiplicative) factor of $\alpha\geq
1$. Allowing the parameter $\alpha$ to grow sufficiently large, existence of $\alpha$-PNE is restored.
But how large does $\alpha$ really \emph{need} to be? And,
perhaps more importantly from a computational perspective, how hard is it to check
whether a specific game has indeed an $\alpha$-PNE?

\subsection{Related Work}\label{sec:related}

The origins of the systematic study of (atomic) congestion games can be traced back
to the influential work of~\citet{Rosenthal1973a}, who also  proved that
\emph{unweighted} congestion games always possess PNE. His proof is
based on a simple but ingenious \emph{potential function} argument, which up to this
day is essentially still the only general tool for establishing existence of pure equilibria.

In follow-up work~\citep{Goemans2005,Libman2001,Fotakis2005a}, the nonexistence of
PNE was demonstrated even for special simple classes of (weighted) games, including
network congestion games with quadratic cost functions and games where the player
weights are either $1$ or $2$. On the other hand, we know that equilibria do exist
for affine or exponential
latencies~\citep{Fotakis2005a,Panagopoulou2007,Harks2012a}, as well as for the class
of singleton\footnote{These are congestion games where the players can only occupy
single resources.} games~\citep{Fotakis2009a,Harks2012}. \citet{Dunkel2008} were
able to extend the nonexistence instance of~\citet{Fotakis2005a} to a gadget in order to show that deciding
whether a congestion game with step cost functions has a PNE is a (strongly) NP-hard problem, via a reduction from \textsc{3-Partition}.

Regarding approximate equilibria, \citet{Hansknecht2014} gave instances of very
simple, two-player polynomial congestion games that do not have $\alpha$-PNE, for $\alpha\approx
1.153$. This lower bound is achieved by numerically solving an optimization
program, using polynomial latencies of maximum degree $d=4$. On the positive side,
\citet{Caragiannis2011} proved that $d!$-PNE always exist; this upper bound on the
existence of $\alpha$-PNE was later improved to
$\alpha=d+1$~\citep{Hansknecht2014,cggs2018-journal} and
$\alpha=d$~\citep{Caragiannis:2019aa,gp2020}.

\subsection{Our Results and Techniques}\label{sec:results}

After formalizing our model in~\cref{sec:model}, in~\cref{sec:nonexistence} we show
the nonexistence of $\varTheta(\frac{\sqrt{d}}{\ln d})$-PNE for
polynomial congestion games of degree $d$. This is the first super-constant lower
bound on the nonexistence of $\alpha$-PNE, significantly improving upon the previous
constant of $\alpha\approx 1.153$ and reducing the gap with the currently best upper
bound of $d$. More specifically (\cref{th:nonexistence}), for any integer $d$ we
construct congestion games with polynomial cost functions of maximum degree $d$ (and
nonnegative coefficients) that do not have $\alpha$-PNE, for any $\alpha<\alpha(d)$
where $\alpha(d)$ is a function that grows as
$\alpha(d)=\varOmega\left(\frac{\sqrt{d}}{\ln d}\right)$.
To derive this bound, we had to use a novel construction with a number of players
growing unboundedly as a function of $d$.

Next, in~\cref{sec:circuit} we turn our attention to computational hardness
constructions. Starting from a Boolean circuit, we create a gadget that transfers
hard instances of the classic \textsc{Circuit Satisfiability} problem to (even
unweighted) polynomial congestion games. Our construction is inspired by the work of Skopalik and Vöcking~\cite{Skopalik2008}, who used a similar family of lockable circuit games in their PLS-hardness result. Using this gadget we can immediately
establish computational hardness for various computational questions of interest
involving congestion games (\cref{th:hardness_PNE}). For example, we show that
deciding whether a $d$-degree polynomial congestion game has an $\alpha$-PNE in
which a specific set of players play a specific strategy profile is NP-hard, even up
to exponentially-approximate equilibria; more specifically, the hardness holds for
\emph{any} $\alpha < 3^{\nicefrac{d}{2}}$.
Our investigation of the hardness questions presented in~\cref{th:hardness_PNE} (and
later on in~\cref{th:hardness_counting} as well)
was  inspired by some similar results presented before by Conitzer and Sandholm
\cite{Conitzer:2008aa} (and even earlier in~\cite{Gilboa1989}) for \emph{mixed} Nash
equilibria in general (normal-form) games.
To the best of our knowledge, our paper is the first to study these questions for
\emph{pure} equilibria in the context of congestion games.
It is of interest to also note here that our
hardness gadget is \emph{gap-introducing}, in the sense that the $\alpha$-PNE and exact PNE of the game coincide.

In~\cref{sec:hardness_existence} we demonstrate how one can combine the hardness
gadget of ~\cref{sec:circuit}, in a black-box way, with any nonexistence instance
for $\alpha$-PNE, in order to derive hardness for the decision version of the
existence of $\alpha$-PNE (\cref{th:bb_hardness}, \cref{th:hardness_existence}). As
a consequence, using the previous $\varOmega\left(\frac{\sqrt{d}}{\ln d}\right)$ lower bound
construction of~\cref{sec:nonexistence}, we can show that deciding whether a
(weighted) polynomial congestion has an $\alpha$-PNE is NP-hard, for any
$\alpha<\alpha(d)$, where $\alpha(d)=\varOmega\left(\frac{\sqrt{d}}{\ln d}\right)$
(\cref{th:hardness_existence2}). Since our hardness is established via a rather
transparent, ``master'' reduction from \textsc{Circuit Satisfiability}, which in
particular is parsimonious, one can derive hardness for a family of related
computation problems; for example, we show that computing the number of
$\alpha$-PNE of a weighted polynomial congestion game is \#P-hard
(\cref{th:hardness_counting}).

In~\cref{sec:general_costs} we drop the assumption on polynomial cost
functions, and study the existence of approximate equilibria under arbitrary
(nondecreasing) latencies as a function of the number of players $n$.
We prove that $n$-player congestion games always have $n$-PNE
(\cref{th:existence_general_n}). As a consequence, one cannot hope to derive
super-constant nonexistence lower bounds by using just simple instances with a fixed
number of players (similar to, e.g., Hansknecht et al.~\cite{Hansknecht2014}).
In particular, this shows that the super-constant number of players in
our construction in~\cref{th:nonexistence} is necessary.
Furthermore, we pair
this positive result with an almost matching lower bound
(\cref{th:nonexistence_general_costs}): we give examples of $n$-player congestion
games (where latencies are simple step functions with a single breakpoint) that
do not have $\alpha$-PNE for all $\alpha<\alpha(n)$, where $\alpha(n)$ grows
according to $\alpha(n)=\varOmega\left(\frac{n}{\ln n}\right)$.
Finally, inspired by our hardness construction for the
polynomial case, we also give a new reduction that establishes NP-hardness for deciding
whether an $\alpha$-PNE exists, for any $\alpha<\alpha(n)= \varOmega\left(\frac{n}{\ln n}\right)$.
Notice that now the number of players $n$ is part of the description of the game (i.e., part of the input) as opposed to the maximum degree $d$ for the polynomial case (which was assumed to be fixed).
On the other hand though, we have more flexibility on designing our gadget latencies, since they can be arbitrary functions.

Concluding, we would like to elaborate on a couple of points. First, the reader would
have already noticed that in all our hardness results the (in)approximability
parameter $\alpha$ ranges freely within an entire interval of the form
$[1,\tilde\alpha)$, where $\tilde \alpha$ is a function of the degree $d$ (for
polynomial congestion games) or of the number of players $n$; and that $\alpha$, $\tilde\alpha$ are
\emph{not} part of the problem's input. It is easy to see that
these features only make our results stronger, with respect to computational
hardness, but also more robust. Secondly, although in this introductory section all our hardness
results were presented in terms of NP-\emph{hardness}, they immediately translate to
NP-\emph{completeness} under standard assumptions on the parameter $\alpha$; e.g., if $\alpha$
is rational (for a more detailed discussion of this, see also the end
of~\cref{sec:model}).

\section{Model and Notation} \label{sec:model}

A (weighted, atomic) \emph{congestion game} is defined by: a finite (nonempty) set of
\emph{resources} $E$, each $e\in E$  having a nondecreasing
\emph{cost (or latency) function} $c_e:\R_{>0} \map \R_{\geq 0}$; and a finite (nonempty) set of
\emph{players} $N$, $\card{N}=n$, each $i\in N$ having a \emph{weight} $w_i>0$
and a set of \emph{strategies} $S_i\subseteq 2^E$.
If all players have the same weight, $w_i=1$ for all $i\in N$, the game is called
\emph{unweighted}.
A \emph{polynomial congestion game} of degree $d$, for $d$ a nonnegative integer, is a
congestion game such that all its cost functions are polynomials of degree at most $d$
with nonnegative coefficients.
A \emph{strategy profile} (or \emph{outcome}) $\vecc s=(s_1,s_2,\dots,s_n)$ is a
collection of strategies, one for each player, i.e.\ $\vecc s \in \vecc{S}=
S_1\times S_2
\times\dots\times S_n$. Each strategy profile $\vecc s$ induces a \emph{cost} of
$C_i(\vecc s)=\sum_{e\in s_i} c_e(x_e(\vecc s))$ to every player $i\in N$, where
$x_e(\vecc s)=\sum_{i: e\in s_i} w_i$ is the induced \emph{load} on resource $e$. An
outcome $\vecc s$ will be called \emph{$\alpha$-approximate (pure Nash) equilibrium
($\alpha$-PNE)}, where $\alpha\geq 1$, if no player can unilaterally improve her
cost by more than a factor of $\alpha$. Formally:
\begin{equation}
\label{eq:approx_PNE_def} C_i(\vecc s) \leq \alpha \cdot C_i\parens{s_i',\vecc s_{-i}}
\qquad \text{for all } i\in N \text{ and all } s_i'\in S_i.
\end{equation} Here we have used the standard game-theoretic notation of $\vecc
s_{-i}$ to denote the vector of strategies resulting from $\vecc s$ if we remove its
$i$-th coordinate; in that way, one can write $\vecc s=(s_i,\vecc s_{-i})$. Notice
that for the special case of $\alpha=1$, \eqref{eq:approx_PNE_def} is equivalent to
the classical definition of pure Nash equilibria; for emphasis, we will sometimes
refer to such $1$-PNE as \emph{exact} equilibria.

If \eqref{eq:approx_PNE_def} does not hold, it means that player $i$ could improve
her cost by more than $\alpha$ by moving from $s_i$ to some other strategy $s'_i$.
We call such a move \emph{$\alpha$-improving}. Finally, strategy $s_i$ is said to be
\emph{$\alpha$-dominating} for player $i$ (with respect to a fixed profile
$\vecc{s}_{-i}$) if

\begin{equation}
\label{eq:dominating_def} C_i \parens{s_i',\vecc s_{-i}} > \alpha \cdot C_i(\vecc s)
\qquad \text{ for all } s_i'\neq s_i.
\end{equation}
In other words, if a strategy $s_i$ is $\alpha$-dominating, every move from some other
strategy $s'_i$ to $s_i$ is $\alpha$-improving. Notice that each player $i$ can have at most one
$\alpha$-dominating strategy (for $\vecc{s}_{-i}$ fixed). In our proofs,
we will employ a \emph{gap-introducing} technique by constructing games with the
property that, for any player $i$ and any strategy profile $\vecc{s}_{-i}$, there is
always a (unique) $\alpha$-dominating strategy for player $i$. As a consequence, the
sets of $\alpha$-PNE and exact PNE coincide.

Finally, for a positive integer $n$, we will use $\Phi_n$ to denote the unique
positive solution of equation $(x+1)^n=x^{n+1}$. Then, $\Phi_n$ is strictly
increasing with respect to $n$, with $\Phi_1=\phi\approx 1.618$ (golden ratio) and
asymptotically $\Phi_n\sim \frac{n}{\ln n}$
(see~\citep[Lemma~A.3]{cggs2018-journal}).

\paragraph*{Computational Complexity}
Most of the results in this paper involve complexity questions, regarding the
existence of (approximate) equilibria.
We are interested in computational problems of the form: ``Given as input a game $G$, does it have an $\alpha$-PNE?'' for a fixed value of $\alpha$.
Whenever we deal with such statements, we
will implicitly assume that the congestion game instances given as inputs to our
problems can be succinctly represented in the following way:

\begin{itemize}[noitemsep]
\item all player have \emph{rational} weights;
\item the resource cost functions are
``efficiently computable''; for polynomial latencies in particular, we will assume that
the coefficients are \emph{rationals}; and for step functions we assume that their values
and breakpoints are \emph{rationals};
\item the strategy sets are
given \emph{explicitly}.
\end{itemize}

There are also computational considerations to be made about the number $\alpha$
appearing in the definition of $\alpha$-PNE. In our results (e.g.,
\cref{th:hardness_PNE,th:hardness_existence}), we will prove NP-hardness of
determining whether games have $\alpha$-PNE for any arbitrary real $\alpha$ below
the nonexistence bound, \emph{regardless of whether $\alpha$ is rational or
irrational, computable or uncomputable}. However, to prove NP-completeness, i.e.\ to
prove that the decision problem belongs in NP (as in \cref{th:hardness_existence}),
we need to be able to verify, given a strategy profile and a deviation of some
player, whether this deviation is an $\alpha$-improving move. This can be achieved
by additionally assuming that the \emph{upper Dedekind cut} of $\alpha$,
$R_\alpha=\set[q\in\qbb]{q>\alpha}$, is a language decidable in polynomial time. In
this paper we will refer to such an $\alpha$ as a \emph{polynomial-time computable}
real number. In particular, notice that rationals are polynomial-time computable;
thus the NP-completeness of the $\alpha$-PNE problem does hold for $\alpha$
rational. We refer the interested reader to \citet{Ko1983} for a detailed discussion
on polynomial-time computable numbers (which is beyond the scope of our paper), as
well as for a comparison with other axiomatizations using binary digits
representations or convergent sequences.
If, more generally, $\alpha:\nbb\map\R$ is a sequence of reals (as in
\cref{th:hardness_existence_general_alt}), we say that $\alpha$ is a
\emph{polynomial-time computable} real sequence if
$R_\alpha=\set[(n,q)\in \nbb \times \qbb]{q>\alpha(n)}$ is a language decidable in
polynomial time.

\section{The Nonexistence Gadget} \label{sec:nonexistence}

In this section we give examples of polynomial congestion games of degree $d$, that do \emph{not} have $\alpha(d)$-PNE; $\alpha(d)$ grows as $\varOmega\left(\frac{\sqrt{d}}{\ln d}\right)$.
Fixing a degree $d\geq 2$, we construct a family of games $\mathcal{G}^d_{(n,k,w,\beta)}$, specified by parameters $n \in \nbb, k \in \set{1, \ldots, d}, w \in [0,1]$, and $\beta \in [0,1]$.
 In $\mathcal{G}^d_{(n,k,w,\beta)}$ there are $n+1$ players: a \emph{heavy player} of weight $1$ and $n$ \emph{light players} $1,\ldots,n$ of equal weights $w$.
 There are $2(n+1)$ resources $a_0,a_1, \ldots, a_n, b_0,b_1,\ldots,b_n$ where $a_0$ and $b_0$ have the same cost function $c_0$ and all other resources $a_1,\ldots,a_n,b_1, \ldots, b_n$ have the same cost function $c_1$ given by
 \[ c_0(x)=x^k \quad \text{and} \quad c_1(x)=\beta x^d.
 \]
 Each player has exactly two strategies, and the strategy sets are given by
 \[ S_0=\set{ \set{a_0,\ldots,a_n},\set{b_0,\ldots,b_n} }
  \quad\text{and}\quad
  S_i=\set{\set{a_0,b_i},\set{b_0,a_i} }\quad\text{for }i=1,\ldots,n.
 \]
 The structure of the strategies is visualized in \Cref{fig:nonexistence}.

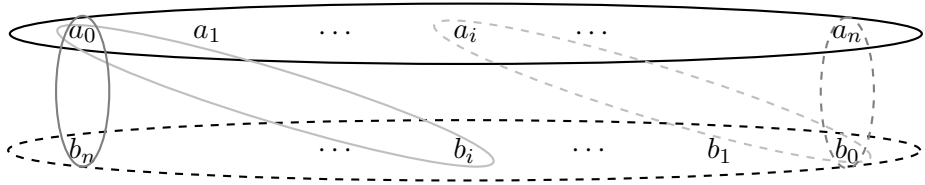
\begin{figure}[t]
	\centering
	\begin{tikzpicture}
		\def \d {1}
    \def \w {0.35}
		\draw node (a0) {$a_0$}
		 			node [right = \d of a0] (a1) {$a_1$}
					node [right = \d of a1] (adots1) {$\cdots$}
					node [right = \d of adots1] (ai) {$a_i$}
					node [right = \d of ai] (adots2) {$\cdots$}
          node [right = \d of adots2] (phb1) {\phantom{$b_1$}}
					node [right = \d of phb1] (an) {$a_n$}
					node [below right = 0.44*\d and 0.44*\d of a1] (mai) {}
					node [below = 0.4*\d of an] (man) {}
					;

		\draw node [below = \d of a0] (bn) {$b_n$}
          node [right = \d of bn] (pha1) {\phantom{$a_1$}}
					node [right = \d of pha1] (bdots1) {$\cdots$}
					node [right = \d of bdots1] (bi) {$b_i$}
					node [right = \d of bi] (bdots2) {$\cdots$}
					node [right = \d of bdots2] (b1) {$b_1$}
					node [below = \d of an] (b0) {$b_0$}
					node [above left = 0.35*\d and 0.45*\d of b1] (mbi) {}
					node [above = 0.35*\d of bn] (mbn) {}
					;

    \draw[thick] (ai) ellipse (6*\d cm and 0.4 cm);
		\draw[thick, dashed] (bi) ellipse (6*\d cm and 0.4 cm);

		\draw[rotate around = {-17 : (mai)}, thick, color=lightgray]
				(mai) ellipse (3*\d cm and \w cm);
		\draw[rotate around = {-17 : (mbi)}, thick, color=lightgray, dashed]
						(mbi) ellipse (3*\d cm and \w cm);
		\draw[rotate around ={90 : (man)}, thick, color=gray, dashed]
			(man) ellipse (\d cm and \w cm);
		\draw[rotate around = {90 : (mbn)}, thick, color=gray]
			(mbn) ellipse (\d cm and \w cm);

	\end{tikzpicture}
	\caption{Strategies of the game $\mathcal{G}^d_{(n,k,w,\beta)}$. Resources contained in the two ellipses of the same colour correspond to the two strategies of a player.
	The strategies of the heavy player and light players $n$ and $i$ are depicted in black, grey and light grey, respectively.}
	\label{fig:nonexistence}
\end{figure}

In the following theorem we give a lower bound on $\alpha$, depending on parameters
$(n,k,w,\beta)$, such that games $\mathcal{G}^d_{(n,k,w,\beta)}$ do not admit an
$\alpha$-PNE. Maximizing this lower bound over all games in the family, we obtain a
general lower bound $\alpha(d)$ on the inapproximability for polynomial congestion
games of degree $d$ (see~\eqref{eq:nonexist_ratio_opt} and its plot
in~\cref{fig:nonexistence_small}). Finally, choosing specific values for the
parameters $(n,k,w,\beta)$, we prove that $\alpha(d)$  is asymptotically lower
bounded by $\varOmega(\frac{\sqrt{d}}{\ln d})$.

\begin{theorem}
\label{th:nonexistence}
For any integer $d\geq 2$, there exist (weighted) polynomial congestion games of
degree $d$ that do not have $\alpha$-PNE for any $\alpha<\alpha(d)$,
where
\begin{samepage}
\begin{align}
\alpha(d)=&\sup_{n,k,w,\beta}\min \set{\frac{1+n\beta(1+w)^d}{(1+nw)^k+n\beta},\frac{(1+w)^k+\beta w^d}{(nw)^k+\beta(1+w)^d} } \label{eq:nonexist_ratio_opt}\\
&\text{s.t.}\quad n\in\nbb,k\in\{1,\ldots,d\},w\in[0,1],\beta\in[0,1].\nonumber
\end{align}
\end{samepage}

In particular, we have the asymptotics $\alpha(d)=\Omega\parens{\frac{\sqrt{d}}{\ln d}}$ and the bound $\alpha(d)\geq\frac{\sqrt{d}}{2\ln d}$, valid for large enough $d$. A plot of the exact values of $\alpha(d)$ (given
by~\eqref{eq:nonexist_ratio_opt}) for small degrees can be found
in~\Cref{fig:nonexistence_small}.
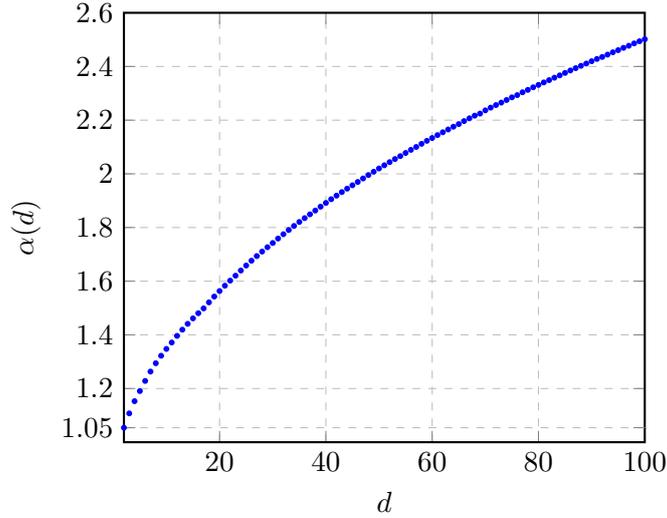
\begin{figure}[h]
\centering
\begin{tikzpicture}
\begin{axis}[
  xmin=2, xmax=100,
  ymin=1, ymax=2.6,
    ytick={1.2,1.4,...,2.6},
    ymajorgrids=true,
    xmajorgrids=true,
    extra y ticks={1.054},
    grid style=dashed,
    ylabel = {$\alpha(d)$},
    xlabel = {$d$},
    thick
]
\addplot[color=blue, only marks, mark size=0.7pt] coordinates
    {(2,1.0540141790979473396) (3,1.107370153362438117) (4,1.152889535629472211)
    (5,1.190508423862156880) (6,1.228303631508044132) (7,1.263116408601240858)
    (8,1.294214790723168450) (9,1.3222023252924773678) (10,1.347548347473695593)
    (11,1.3714041370263117508) (12,1.39618289713998503) (13,1.4193169771557038078)
    (14,1.440998681161281296) (15,1.4613811526582083621) (16,1.480567615135332816)
    (17,1.4986700525530704513) (18,1.521062799015134105) (19,1.54270838696362820)
    (20,1.563344248138408466) (21,1.583131179299605525) (22,1.602018079122111087)
    (23,1.620498874702096227) (24,1.639764463938681879) (25,1.658345416452749645)
    (26,1.676274654881171252) (27,1.693589577715126061) (28,1.7103216900329176963)
    (29,1.72648901720613499) (30,1.742297612260275997) (31,1.758910900141621082)
    (32,1.775000615409749991) (33,1.79061621987729414) (34,1.805776459710548517)
    (35,1.820643319156171548) (36,1.835039885996982471) (37,1.849051356777990172)
    (38,1.8632178093546148756) (39,1.877580446410725499) (40,1.891594176768753863)
    (41,1.905419217318211018) (42,1.918743797953874138) (43,1.93198155245319265)
    (44,1.944796558563581325) (45,1.957339942090330493) (46,1.969741285371048421)
    (47,1.9827320567687234043) (48,1.995396659292826213) (49,2.0077598678151992514)
    (50,2.019951919538817920) (51,2.0318506437685694077) (52,2.043557694254189955)
    (53,2.0550726405050166442) (54,2.065868165427579875) (55,2.0777540426080042548)
    (56,2.0893167741414560007) (57,2.100146688317447379) (58,2.111757748261389386)
    (59,2.1229607289450792557) (60,2.1334160715553645774) (61,2.1442905418729679244)
    (62,2.154793281438282600) (63,2.165243330097857295) (64,2.1755625648044519291)
    (65,2.18542146022641392) (66,2.196527068187740926) (67,2.206819681430450796)
    (68,2.216708579001054468) (69,2.2245048535808527924) (70,2.2364113021405588641)
    (71,2.246628630071107115) (72,2.256184841621078045) (73,2.26537600640903807)
    (74,2.2750610031117403033) (75,2.2849568000514343187) (76,2.293266798574380277)
    (77,2.303810861124610325) (78,2.312788452929739513) (79,2.322317954524468358)
    (80,2.331345711455386609) (81,2.340538676470352899) (82,2.349088085174637016)
    (83,2.358120028751895405) (84,2.366807583374904258) (85,2.3759335067794812244)
    (86,2.385072660331966221) (87,2.393628218527229176) (88,2.4024489095452699624)
    (89,2.410849597185745922) (90,2.419543602769812323) (91,2.428028013943385063)
    (92,2.435463573940856693) (93,2.444122869529486414) (94,2.452736829615797182)
    (95,2.4608168637707354936) (96,2.469384773292750051) (97,2.477212167857703696)
    (98,2.485892762433796739) (99,2.4938834934272180006) (100,2.5015225634873647209)
    };
\end{axis}
\end{tikzpicture}
\caption{Nonexistence of $\alpha(d)$-PNE for weighted polynomial
congestion games of degree $d$, as given by~\eqref{eq:nonexist_ratio_opt}
in~\cref{th:nonexistence}, for $d=2,3,\dots,100$.
In particular, for small values of $d$, $\alpha(2)\approx 1.054$, $\alpha(3) \approx 1.107$ and $\alpha(4)\approx 1.153$.
}
\label{fig:nonexistence_small}
\end{figure}
\end{theorem}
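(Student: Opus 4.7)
The game $\mathcal G^d_{(n,k,w,\beta)}$ admits an involution $\sigma$ that swaps $a_i \leftrightarrow b_i$ for every $i=0,1,\dots,n$ together with the induced swap of each player's two strategies; since $\sigma$ is cost-preserving, when searching for $\alpha$-PNE I may assume throughout that the heavy player plays $\{a_0,\dots,a_n\}$. Any such strategy profile is then completely described by the number $j\in\{0,\dots,n\}$ of light players choosing the ``compatible'' strategy $\{a_0,b_i\}$ (i.e.\ the one sharing $a_0$ with the heavy player), because the light players are symmetric among themselves.

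I would then compute the three relevant costs as functions of $j$, namely
\begin{align*}
C_{\mathrm{heavy}}(j) &= (1+jw)^k + j\beta + (n-j)\beta(1+w)^d,\\
C_{\mathrm{compat}}(j) &= (1+jw)^k + \beta w^d,\\
C_{\mathrm{incompat}}(j) &= \bigl((n-j)w\bigr)^k + \beta(1+w)^d,
\end{align*}
together with the cost each player would incur after switching to her alternative strategy. The argument then splits into two cases. When $j=0$, the heavy player's cost is $1+n\beta(1+w)^d$ while switching to $\{b_0,\dots,b_n\}$ reduces it to $(1+nw)^k+n\beta$, giving an improvement factor exactly equal to the first ratio $R_1 = \tfrac{1+n\beta(1+w)^d}{(1+nw)^k+n\beta}$. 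When $1\le j\le n$, I focus on an arbitrary compatible light player $i$; switching to $\{b_0,a_i\}$ yields the improvement ratio $\tfrac{(1+jw)^k+\beta w^d}{((n-j+1)w)^k+\beta(1+w)^d}$, which is nondecreasing in $j$ (the numerator weakly increases and the denominator weakly decreases), hence minimized at $j=1$, where it equals $R_2 = \tfrac{(1+w)^k+\beta w^d}{(nw)^k+\beta(1+w)^d}$. Combining the two cases, no profile is an $\alpha$-PNE whenever $\alpha<\min\{R_1,R_2\}$; taking the sup over admissible $(n,k,w,\beta)$ yields formula \eqref{eq:nonexist_ratio_opt}.

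For the asymptotic lower bound $\alpha(d)\ge \tfrac{\sqrt d}{2\ln d}$, I would plug into \eqref{eq:nonexist_ratio_opt} an explicit choice of parameters that roughly equalises $R_1$ and $R_2$. A natural starting point is $w=1$ and $k=\lceil d/2\rceil$, which reduces the two ratios to expressions in $(n,\beta)$ of the form $R_1\approx \tfrac{n\beta\cdot 2^d}{(1+n)^{\lceil d/2\rceil}}$ and $R_2\approx \tfrac{2^{\lceil d/2\rceil}}{\beta\cdot 2^d}$; balancing them suggests $\beta$ of order $2^{-d/2}\cdot\tfrac{\ln d}{\sqrt d}$ together with a moderate choice of $n$, calibrated using the asymptotics $\Phi_n\sim n/\ln n$ recalled in \cref{sec:model}. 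The structural nonexistence argument above is essentially routine; I expect the main technical obstacle to be precisely this asymptotic calibration, namely verifying that both ratios simultaneously exceed $\tfrac{\sqrt d}{2\ln d}$ for \emph{all} sufficiently large $d$ (not just along a convenient subsequence of degrees) and with the claimed constant $\tfrac12$.
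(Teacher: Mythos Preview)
Your derivation of the formula~\eqref{eq:nonexist_ratio_opt} is correct and is essentially the paper's argument, carried out with a bit more care: the paper's two cases ``heavy player alone on $a_0$'' and ``heavy player shares $a_0$'' correspond exactly to your $j=0$ and $j\ge 1$, and your monotonicity observation in $j$ is precisely what the paper implicitly uses when it bounds the light player's current cost from below by $c_0(1+w)+c_1(w)$ and the deviation cost from above by $c_0(nw)+c_1(1+w)$.

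The asymptotic part, however, has a genuine gap: your suggested starting point $w=1$, $k=\lceil d/2\rceil$ cannot work for \emph{any} choice of $n$ and $\beta$. Indeed, with $w=1$ and $n\ge 2$ one has
\[
R_2=\frac{2^k+\beta}{n^k+\beta\,2^d}\le\frac{2^k+\beta}{2^k+\beta\,2^d}<1,
\]
while for $n=1$ the two ratios are exact reciprocals, $R_1R_2=1$, so $\min\{R_1,R_2\}\le 1$ always. Thus $w=1$ never yields anything above $1$, let alone $\sqrt d/(2\ln d)$. (The reference to $\Phi_n$ is also out of place here; that quantity enters only in the general-cost setting of \cref{sec:general_costs}.) The paper's choice is quite different and rather delicate: it takes $w=\tfrac{\ln d}{2d}\to 0$, a \emph{slowly} growing $k=\lceil\tfrac{\ln d}{2\ln\ln d}\rceil$, and a \emph{large} $n\approx 1/(d^{1/(2(k+1))}w)$, together with a matching $\beta$. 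The point of $w\to 0$ is twofold: it keeps $(1+w)^d\approx\sqrt d$ merely polynomial (rather than the exponential $2^d$ you get at $w=1$), and it allows $nw$ to stay below $1$ even for large $n$, so that $(nw)^k$ in the denominator of $R_2$ is controlled. Verifying that both ratios then exceed $\tfrac{\sqrt d}{2\ln d}$ requires several careful estimates (the paper isolates these as a separate technical lemma); but the crucial missing idea in your proposal is this regime of small $w$, small $k$, and large $n$.
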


Interestingly, for the special case of $d=2,3,4$, the values of $\alpha(d)$ (see \cref{fig:nonexistence_small}) yield
\emph{exactly} the same lower bounds with Hansknecht et al.~\cite{Hansknecht2014}.
This is a direct consequence of the fact that $n=1$ turns out to be an optimal
choice in~\eqref{eq:nonexist_ratio_opt} for $d\leq 4$, corresponding to an instance
with only $n+1=2$ players (which is the regime of the construction
in~\cite{Hansknecht2014}); however, this is not the case for larger values of $d$, where more
players are now needed in order to derive the best possible value
in~\eqref{eq:nonexist_ratio_opt}.
Furthermore, as we discussed also in
\cref{sec:results}, no construction with only $2$ players can result in bounds
larger than $2$ (\cref{th:existence_general_n}).

\begin{proof}
Due to symmetries, it is enough to just consider the following two cases for the strategy profiles in game $\mathcal{G}^d_{(n,k,w,\beta)}$ described above:

\emph{\underline{Case 1:} The heavy player is alone on resource $a_0$.}
  This means that every light player $i\in\{1,\ldots, n\}$ must have chosen strategy $\{b_0,a_i\}$. Thus the heavy player incurs a cost of $c_0(1)+n c_1(1+w)$; while, deviating to strategy $\{b_0,\ldots,b_n\}$, she would incur a cost of $c_0(1+nw)+nc_1(1)$. The improvement factor can then be lower bounded by
	\begin{equation*}
    \frac{c_0(1)+n c_1(1+w)}{c_0(1+nw)+nc_1(1)}=\frac{1+n\beta(1+w)^d}{(1+nw)^k+n\beta}.
    \label{eq:nonexist_ratio_1}
  \end{equation*}

\emph{\underline{Case 2:} The heavy player shares resource $a_0$ with at least one light player $i\in\{1,\ldots, n\}$.}
  Thus player $i$ incurs a cost of at least $c_0(1+w)+c_1(w)$; while, deviating to strategy $\{b_0,a_i\}$, she would incur a cost of at most $c_0(nw)+c_1(1+w)$. The improvement factor can then be lower bounded by
	\begin{equation*}\label{eq:nonexist_ratio_2}\frac{c_0(1+w)+c_1(w)}{c_0(nw)+c_1(1+w)}=\frac{(1+w)^k+\beta w^d}{(nw)^k+\beta(1+w)^d}.\end{equation*}

In order for the game to not have an $\alpha$-PNE, it is enough to guarantee that both ratios are greater than $\alpha$.
Maximizing these ratios over all games in the family, yields the lower bound in the statement of the theorem,

\begin{align*}
\alpha(d)= &\sup_{n,k,w,\beta}\min
    \set{ \frac{1+n\beta(1+w)^d}{(1+nw)^k+n\beta},\frac{(1+w)^k+\beta w^d}{(nw)^k+\beta(1+w)^d} }\\
    &\text{s.t.} \quad n\in\nbb,k\in\{1,\ldots,d\},w\in[0,1],\beta\in[0,1].
\end{align*}

For small values of $d$ the above quantity can be computed numerically
(see~\cref{fig:nonexistence_small}); in particular, for $d=2,3,4$ this yields the
same lower bounds as in~\citet{Hansknecht2014}, since  $n=1$ is the optimal choice.

Next we prove the asymptotics $\alpha(d)=\Omega\left(\frac{\sqrt{d}}{\ln d}\right)$. To that end, we take the following choice of parameters:
\[
  w=\frac{\ln d}{2d}\,,\,
  k=\left\lceil\frac{\ln d}{2\ln\ln d}\right\rceil\,,\,
  \beta=\frac{1}{d^{\frac{k}{2(k+1)}}(1+w)^d}\,,\,
  n=\left\lfloor\frac{1}{d^{\frac{1}{2(k+1)}}w}\right\rfloor.
\]
One can check that this choice satisfies $k \in \set{1, \ldots, d}$ (for $d \ge 4$) and $w,\beta \in [0,1]$.
We can bound the expressions appearing in \eqref{eq:nonexist_ratio_opt} as follows.

\begin{align}
  1+n\beta(1+w)^d
  &\geq1+\left(\frac{1}{d^{\frac{1}{2(k+1)}}w}-1\right)\frac{1}{d^{\frac{k}{2(k+1)}}(1+w)^d}(1+w)^d\nonumber\\
  &= 1+\left(\frac{2d}{d^{\frac{1}{2(k+1)}}\ln d}-1\right)\frac{1}{d^{\frac{k}{2(k+1)}}}\nonumber\\
  &=\frac{2d}{d^{\frac{1}{2(k+1)}+\frac{k}{2(k+1)}}\ln d}+1-\frac{1}{d^{\frac{k}{2(k+1)}}}\nonumber\\
  &\geq\frac{2d}{d^{\nicefrac{1}{2}}\ln d}\tag{since $d\geq1$}\\
  &=\frac{2\sqrt{d}}{\ln d};\label{eq:nonexist_aux_bound_1}\\
  (1+nw)^k+n\beta
  &\leq\left(1+\frac{1}{d^{\frac{1}{2(k+1)}}w}w\right)^k + \frac{1}{d^{\frac{1}{2(k+1)}}wd^{\frac{k}{2(k+1)}}(1+w)^d} \nonumber\\
  &=\left(1+d^{-\frac{1}{2(k+1)}}\right)^k+\frac{1}{d^{\nicefrac{1}{2}}\frac{\ln d}{2d}\left(1+\frac{\ln d}{2d}\right)^d}\nonumber\\
  &=\left(1+d^{-\frac{1}{2(k+1)}}\right)^k+\frac{2\sqrt{d}}{\ln d\left(1+\frac{\ln d}{2d}\right)^d};\label{eq:nonexist_aux_bound_2}\\
  (1+w)^k+\beta w^d&\geq 1; \label{eq:nonexist_aux_bound_3}\\
  (nw)^k+\beta(1+w)^d
  &\leq \left(\frac{1}{d^{\frac{1}{2(k+1)}}w}w\right)^k+\frac{1}{d^{\frac{k}{2(k+1)}}(1+w)^d}(1+w)^d\nonumber\\
  &=2\cdot\frac{1}{d^{\frac{k}{2(k+1)}}}=\frac{2d^{\frac{1}{2(k+1)}}}{\sqrt{d}}\leq\frac{2d^\frac{\ln\ln d}{\ln d}}{\sqrt{d}}=\frac{2\ln d}{\sqrt{d}}. \label{eq:nonexist_aux_bound_4}
\end{align}
In the Appendix, we prove (\cref{lem:nonexist_ratio_2_global}) that the final quantity in \eqref{eq:nonexist_aux_bound_2} converges to 1 as $d\rightarrow\infty$; in particular, it is upper bounded by $4$ for $d$ large enough (numerically, we can observe that $d\geq 8$ suffices). Thus, we can lower bound the ratios of \eqref{eq:nonexist_ratio_opt} as

\begin{align}
  \frac{1+n\beta(1+w)^d}{(1+nw)^k+n\beta}&\geq\frac{\frac{2\sqrt{d}}{\ln d}}{4}=\frac{\sqrt{d}}{2\ln d}=\Omega\left(\frac{\sqrt{d}}{\ln d}\right), \tag{from \eqref{eq:nonexist_aux_bound_1}, \eqref{eq:nonexist_aux_bound_2} and large $d$}\\
  \frac{(1+w)^k+\beta w^d}{(nw)^k+\beta(1+w)^d}&\geq\frac{1}{\frac{2\ln d}{\sqrt{d}}}=\frac{\sqrt{d}}{2\ln d}=\Omega\left(\frac{\sqrt{d}}{\ln d}\right).\tag{from \eqref{eq:nonexist_aux_bound_3} and \eqref{eq:nonexist_aux_bound_4}}
\end{align}

This proves the asymptotics and the bound $\alpha(d)\geq\frac{\sqrt{d}}{2\ln d}$ for large $d$.
\end{proof}

\section{The Hardness Gadget} \label{sec:circuit}

In this section we construct an unweighted polynomial congestion game from a Boolean
circuit. In the $\alpha$-PNE of this game the players emulate the computation of the
circuit. This gadget will be used in reductions from \textsc{Circuit Satisfiability}
to show NP-hardness of several problems related to the existence of approximate
equilibria with some additional properties. For example, deciding whether a
congestion game has an $\alpha$-PNE where a certain set of players choose a specific
strategy profile (\cref{th:hardness_PNE}).

\paragraph*{Circuit Model}
We consider Boolean circuits consisting of NOT gates and 2-input NAND gates only. We assume that the two inputs to every NAND gate are different. Otherwise we replace the NAND gate by a NOT gate, without changing the semantics of the circuit.
We further assume that every input bit is connected to exactly one gate and this gate is a NOT gate. See \Cref{subfig:validcirc} for a \emph{valid} circuit.
In a valid circuit we replace every NOT gate by an equivalent NAND gate, where one of the inputs is fixed to 1.
See the replacement of gates $g_5, g_4$ and $g_2$ in the example in \Cref{subfig:elimNotgates}.
Thus, we look at circuits of 2-input NAND gates where both inputs to a NAND gate are different and every input bit of the circuit is connected to exactly one NAND gate where the other input is fixed to 1. A circuit of this form is said to be in \emph{canonical form}.
For a circuit $C$ and a vector $x \in \set{0,1}^n$ we denote by $C(x)$ the output of the circuit on input $x$.

We model a circuit $C$ in canonical form as a \emph{directed acyclic graph}. The nodes of this graph correspond to the input bits $x_1, \ldots, x_n$, the gates $g_1, \ldots, g_K$ and a node $1$ for all fixed inputs.
There is an arc from a gate $g$ to a gate $g'$ if the output of $g$ is input to gate $g'$ and there are arcs from the fixed input and all input bits to the connected gates.
We index the gates in reverse topological order, so that all successors of a gate $g_k$ have a smaller index and the output of gate $g_1$ is the output of the circuit.
Denote by $\delta^+(v)$ the set of the direct successors of node $v$.
Then we have $|\delta^+(x_i)| = 1$ for all input bits $x_i$ and $\delta^+(g_k) \subseteq \set[g_{k'}]{k' < k} $ for every gate $g_k$.
See \Cref{fig:circ} for an example of a valid circuit, its canonical form and the corresponding directed acyclic graph.

\begin{figure}[t]
  \begin{subfigure}{0.33 \textwidth}
    \centering
   \includegraphics[trim={0.6cm 24.7cm 0 0}, clip, scale = 1.1]{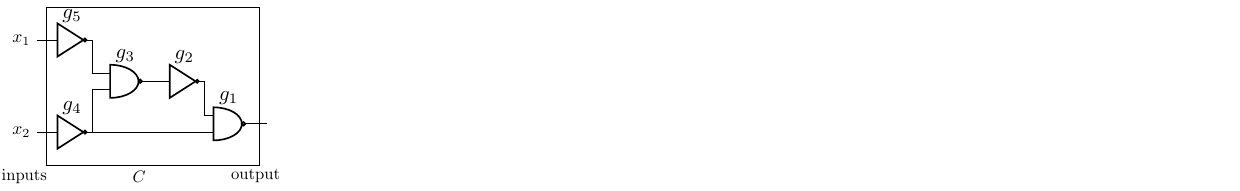}
  \caption{valid circuit $C$}
  \label{subfig:validcirc}
  \end{subfigure} \quad
  \begin{subfigure}{0.3 \textwidth}
    \centering
    \includegraphics[trim={0.6cm 24.7cm 14cm 0}, clip, scale = 1.1]{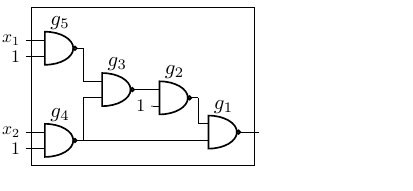}
  \caption{ canonical form of $C$}
  \label{subfig:elimNotgates}
  \end{subfigure} \qquad
  \begin{subfigure}{0.3 \textwidth}
    \centering
    \includegraphics[scale=1.15]{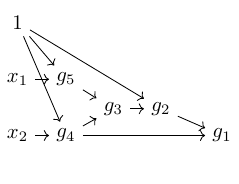}
  \caption{directed acyclic graph}
  \label{subfig:circuitdag}
  \end{subfigure}
\caption{Example of a valid circuit $C$ (having both NOT and NAND gates), its canonical form (having only NAND gates), and the directed acyclic graph corresponding to $C$.}
\label{fig:circ}
\end{figure}

\paragraph*{Translation to Congestion Game}
Fix some integer $d \ge 1$ and a parameter $\mu \ge 1 + 2\cdot3^{d + \nicefrac{d}{2}}$.
From a valid circuit in canonical form with input bits $x_1, \ldots, x_n$, gates $g_1, \ldots, g_K$ and the extra input fixed to $1$, we construct a polynomial congestion game $\mathcal{G}_\mu^d$ of degree $d$.
There are $n$ \emph{input players} $X_1, \ldots, X_n$ for every input bit, a \emph{static player} $P$ for the input fixed to $1$, and $K$ \emph{gate players} $G_1, \ldots, G_K$ for the output bit of every gate. $G_1$ is sometimes called \emph{output player} as $g_1$ corresponds to the output $C(x)$.

The idea is that every input and every gate player have a \emph{zero} and a \emph{one strategy}, corresponding to the respective bit being $0$ or $1$. In every $\alpha$-PNE we want the players to emulate the computation of the circuit, i.e.~the NAND semantics of the gates should be respected.
For every gate $g_k$, we introduce two \emph{resources} $\resourceZero{k}$ and $\resourceOne{k}$. The zero (one) strategy of a player consists of the $\resourceZero{k'}$ ($\resourceOne{k'}$) resources of the direct successors in the directed acyclic graph corresponding to the circuit and its own $\resourceZero{k}$ ($\resourceOne{k}$) resource (for gate players). The static player has only one strategy playing all $\resourceOne{k'}$ resources of the gates where one input is fixed to 1.

Formally, we have $S_{X_i} = \set{s_{X_i}^0, s_{X_i}^1 }$, where
\begin{equation*}
s_{X_i}^0 = \set[\resourceZero{k}]{g_k \in \delta^+(x_i)} \text{ and } s_{X_i}^1 = \set[\resourceOne{k}]{g_k \in \delta^+(x_i)}
\end{equation*}
are the zero and one strategy of input player $X_i$. Recall that $\delta^+(x_i)$ is the set of direct successors of $x_i$, thus every strategy of an input player consists of exactly one resource.
For a gate player $G_k$ we have $S_{G_k} = \set{s_{G_k}^0, s_{G_k}^1 }$ with the two strategies
\begin{equation*}
  s_{G_k}^0 = \set{\resourceZero{k}} \union \set[\resourceZero{k'}]{g_{k'} \in \delta^+(g_k)} \text{ and } s_{G_k}^1 = \set{\resourceOne{k}} \union \set[\resourceOne{k'}]{g_{k'} \in \delta^+(g_k)}
\end{equation*}
consisting of at most $k$ resources each.
The strategy of the static player is $s_P = \set[\resourceOne{k}]{g_k \in \delta^+(1)}$.
Notice that all $3$ players related to a gate $g_k$ (gate player $G_k$ and the two players corresponding to the input bits) are different and observe that every resource $\resourceZero{k}$ and $\resourceOne{k}$ can be played by exactly those $3$ players.

We define the cost functions of the resources using parameter $\mu$.
The cost functions for resources $\resourceOne{k}$ are given by $\costOne{k}$ and for resources $\resourceZero{k}$ by $\costZero{k}$, where
\begin{equation}
  \costOne{k}(x) = \mu^k x^d \qquad \text{and} \qquad
  \costZero{k}(x) = \lambda \mu^k x^d
  \text{, with }
 \lambda = 3^{\nicefrac{d}{2}}
 .
 \label{eq:lambda_def}
 \end{equation}

Our construction here is inspired by the lockable circuit games of Skopalik and
Vöcking~\cite{Skopalik2008}. The key technical differences are that our gadgets use
polynomial cost functions (instead of general cost functions) and only $2$ resources
per gate (instead of $3$). Moreover, while in~\cite{Skopalik2008} these games are used as
part of a PLS-reduction from \textsc{Circuit/FLIP}, we are also interested in
constructing a gadget to be studied on its own, since this can give rise to
additional results of independent interest (see~\cref{th:hardness_PNE}).

\paragraph*{Properties of the Gadget}
For a valid circuit $C$ in canonical form consider the game $\mathcal{G}^d_\mu$ as defined above.
We interpret any strategy profile $\vecc{s}$ of the input players as a bit vector $x \in \set{0,1}^n$ by setting $x_i = 0$ if $s_{X_i} = s_{X_i}^0$ and $x_i = 1 $ otherwise.
The gate players are said to \emph{follow the NAND semantics} in a strategy profile, if for every gate $g_k$ the following holds:
\begin{itemize}
\item if both players corresponding to the input bits of $g_k$ play their one strategy, then the gate player $G_k$ plays her zero strategy;
\item if at least one of the players corresponding to the input bits of $g_k$ plays her zero strategy, then the gate player $G_k$ plays her one strategy.
\end{itemize}
We show that for the right choice of $\alpha$, the set of $\alpha$-PNE in $\mathcal{G}^d_\mu$ is the same as the set of all strategy profiles where the gate players follow the NAND semantics.

Define
\begin{equation}
  \label{eq:epsilon_mu_def}
  \varepsilon(\mu) = \frac{3^{d + \nicefrac{d}{2}}}{\mu - 1}.
\end{equation}
From our choice of $\mu$, we obtain $ 3^{\nicefrac{d}{2}} - \varepsilon(\mu)
\ge 3^{\nicefrac{d}{2}} - \frac{1}{2} > 1 $.
For any valid circuit $C$ in canonical form and a valid choice of $\mu$ the following lemma holds for $\mathcal{G}^d_\mu$.
\begin{lemma}
\label{lem:circuit_NANDsemantics}
Let $\vecc{s}_X$ be any strategy profile for the input players $X_1, \ldots, X_n$ and let $x \in \set{0,1}^n$ be the bit vector represented by $\vecc{s}_X$.
For any $\mu \ge 1+2\cdot3^{d+\nicefrac{d}{2}}$ and any $1 \le \alpha < 3^{\nicefrac{d}{2}} -
\varepsilon(\mu)$, there is a unique $\alpha$-PNE\footnote{Which, as a
matter of fact, is actually also an \emph{exact} PNE.} in $\mathcal{G}^d_\mu$ where the input
players play according to $\vecc{s}_X$. In particular, in this $\alpha$-PNE the gate players follow the NAND semantics, and the output player $G_1$ plays
according to $C(x)$.
\end{lemma}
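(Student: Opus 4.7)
The plan is to reduce the lemma to a local statement about each gate player: for any strategy profile $\vecc{s}_{-G_k}$ of the remaining players, the strategy of gate player $G_k$ that realizes the NAND of the two bit-values currently played on $g_k$'s inputs is the unique $\alpha$-dominating strategy in the sense of~\eqref{eq:dominating_def}, for every $\alpha<3^{\nicefrac{d}{2}}-\varepsilon(\mu)$. Once this local dominance is in hand, any $\alpha$-PNE extending $\vecc{s}_X$ must assign each $G_k$ the NAND of her currently-seen inputs; a backward induction on the gate index $k=K,K-1,\dots,1$ then propagates the bits of $\vecc{s}_X$ through the DAG in the unique NAND-respecting way, in particular forcing $y_1=C(x)$.

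To establish the local dominance I would decompose $G_k$'s cost as $C_{G_k}(s_{G_k}^y) = A_y + E_y$, where $A_y$ is the cost on $G_k$'s own resource ($\resourceOne{k}$ or $\resourceZero{k}$) and $E_y$ collects the costs on the successor-resources indexed by $k'<k$. Since each resource is used by at most three players and the coefficients $\mu^{k'}$ form a geometric progression, summing the geometric series gives the uniform bound $E_y \leq \sum_{k'<k} \lambda\cdot 3^d \mu^{k'} \leq \varepsilon(\mu)\mu^k$. Hence the own-resource contribution $A_y$ dominates, and crucially it depends only on the pair $(b_1,b_2)\in\{0,1\}^2$ of bit-values that the two input players of $g_k$ are currently playing (with $b=1$ forced when the input is the static player $P$).

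Next I would run a short case analysis over $(b_1,b_2)$, computing $A_0$ and $A_1$ directly from the loads on $\resourceZero{k},\resourceOne{k}$. In every case the NAND choice $\bar y$ minimizes $A_y$, and the ratio $A_{y^*}/A_{\bar y}$ of the non-NAND cost to the NAND cost equals $\lambda=3^{\nicefrac{d}{2}}$ in the generic patterns where at least one input is $1$, but jumps to the much larger $3^{\nicefrac{3d}{2}}$ in the exceptional pattern $b_1=b_2=0$ (because flipping $G_k$ to $0$ then pushes the load on $\resourceZero{k}$ all the way up to $3$). Crucially, in the generic cases one additionally has $A_{\bar y}\geq 3^{\nicefrac{d}{2}}\mu^k$, while in the exceptional case $A_{\bar y}=\mu^k$ is small but the enormous ratio compensates. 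Feeding these bounds into $A_{y^*}/(A_{\bar y}+\varepsilon(\mu)\mu^k)$ and rearranging produces $C_{G_k}(s^{y^*})/C_{G_k}(s^{\bar y})\geq 3^{\nicefrac{d}{2}}-\varepsilon(\mu)$ in all cases, which is exactly the required strict dominance.

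Finally, to verify that the NAND profile thus pinned down is genuinely an $\alpha$-PNE (and not merely the only candidate), the static player is vacuous, the gate players are covered by the dominance just proved, and for each input player $X_i$ a direct one-line computation in the two subcases $x_i\in\{0,1\}$ shows that flipping her bit in the NAND profile only makes her singleton resource more crowded and multiplies her cost by at least $\lambda$, ruling out any $\alpha$-improving deviation (which in passing also yields the exact-PNE claim of the footnote). I expect the most delicate point of the argument to be the sharp constant $3^{\nicefrac{d}{2}}-\varepsilon(\mu)$: using only the naive bound $A_{\bar y}\geq\mu^k$ throughout yields the strictly weaker threshold $\alpha<3^{\nicefrac{d}{2}}/(1+\varepsilon(\mu))$, so the case split on $(b_1,b_2)$ is genuinely needed in order to separate the ``small-$A_{\bar y}$'' exceptional branch from the ``small-ratio'' generic branches and recover the advertised inequality.
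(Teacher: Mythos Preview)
Your proposal is correct and follows essentially the same route as the paper: bound the tail cost on successor resources by a geometric series, do a case analysis on the bit-values feeding gate $g_k$ to show the NAND strategy is $\alpha$-dominating, then check the input players directly.

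One small point of comparison: you use the uniform bound $E_y\leq\varepsilon(\mu)\mu^k$ (including the factor $\lambda$ from zero-resources) for both strategies, which forces you to single out the exceptional pattern $(b_1,b_2)=(0,0)$. The paper instead keeps the tighter bound $E_1<\frac{3^d}{\mu-1}\mu^k$ for the one-strategy tail (no factor $\lambda$), and in its Case~2 lower-bounds the zero-strategy cost by $c_{0_k}(2)$ rather than splitting on whether the load on $0_k$ is $2$ or $3$. With those two tweaks the paper handles all of $(0,0),(0,1),(1,0)$ in a single stroke and still recovers the sharp constant $3^{d/2}-\varepsilon(\mu)$ via $\frac{1}{1+x}>1-x$; so your ``genuinely needed'' case split is in fact avoidable. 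Also, your phrase ``more crowded'' for the input-player deviation is slightly imprecise (in the $x_i=1$ subcase the load stays at $2$ and it is the coefficient $\lambda$ that does the work), though your conclusion that the cost multiplies by at least $\lambda$ is correct.
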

\begin{proof}
  Let $\mu>1+2\cdot3^{d+\nicefrac{d}{2}}$ and $\alpha < 3^{\nicefrac{d}{2}} - \varepsilon(\mu)$.
  First, we fix the input players to the strategies given by $\vecc{s}_X$ and show that in any $\alpha$-PNE every gate player follows the NAND semantics, as otherwise changing to the strategy corresponding to the NAND of its input bits is an $\alpha$-improving move.
  Second, we show that in any $\alpha$-PNE where the gate players follow the NAND semantics, the input players have no incentive to change their strategy.
  In total we get that every strategy profile for the input players can be extended to an $\alpha$-PNE, where the gate players emulate the circuit. Hence, for fixed strategies of the input players given by $\vecc{s}_X$ this $\alpha$-PNE is unique.

  Let $\vecc{s}_X$ be any strategy profile for the input players $X_1, \ldots, X_n$ and let $\vecc{s}$ be an $\alpha$-PNE of $\mathcal{G}^d_\mu$ where the input players play according to $\vecc{s}_X$.
  Take $G_k$ to be any of the gate players and let $P_a$ and $P_b$ be the players corresponding to the input bits of gate $g_k$. Note that $P_a$ and $P_b$ can be other gate players or input players, and one of them can be the static player.
  To show that $G_k$ follows the NAND semantics we consider two cases.

  \emph{\underline{Case 1:} Both $P_a$ and $P_b$ play their one strategy in $\vecc{s}$.}
  As both $P_a$ and $P_b$ play resource $\resourceOne{k}$ and all three players $P_a, P_b$ and $G_k$ are different, the cost of $G_k$'s one strategy is at least $\costOne{k}(3)$.
  The cost of $G_k$'s zero strategy is at most $\costZero{k}(1) + \sum_{k'=1}^{k-1} \costZero{k'}(3)$.
  Thus, we have
  \begin{equation*}
    \frac{C_{G_k}(s_{G_k}^1, \vecc{s}_{-G_k})}{C_{G_k}(s_{G_k}^0, \vecc{s}_{-G_k})}
    \ge \frac{\costOne{k}(3)}{\costZero{k}(1) + \sum_{k'=1}^{k-1} \costZero{k'}(3)}
    = \frac{\mu^k 3^d}{\lambda \mu^k + \sum_{k'=1}^{k-1} \lambda \mu^{k'} 3^d}
    > \frac{3^d}{\lambda} \parens{ \frac{1}{1 + \frac{1}{\mu - 1} 3^d } },
  \end{equation*}
  where we used that $ \frac{1}{\mu^k} \sum_{k'=1}^{k-1} \mu^{k'} = \frac{1}{\mu^k} \parens{\frac{\mu ^ k - \mu}{\mu - 1}}  < \frac{1}{\mu - 1} $.
  By the definition of $\lambda$ (see~\eqref{eq:lambda_def}) and $\varepsilon(\mu)$ (see~\eqref{eq:epsilon_mu_def}), we obtain
  \begin{equation}\label{eq:nandsemantics_aux_1}
     \frac{3^d}{\lambda} \parens{ \frac{1}{1 + \frac{1}{\mu - 1} 3^d } }
    = 3^{\nicefrac{d}{2}} \parens{\frac{1}{1 + \frac{1}{\mu - 1} 3^d } }
    > 3^{\nicefrac{d}{2}} \parens{1 - \frac{1}{\mu - 1} 3^d  }
    = 3^{\nicefrac{d}{2}} - \varepsilon(\mu)
    > \alpha
    .
  \end{equation}
  Hence, changing from the one to the zero strategy would be an $\alpha$-improving move for $G_k$.
  Thus, $G_k$ must follow the NAND semantics and play her zero strategy in $\vecc{s}$.

  \emph{\underline{Case 2:} At least one of $P_a$ or $P_b$ is playing her zero strategy in $\vecc{s}$.}
  By similar arguments to the previous case, we obtain that the cost of $G_k$'s zero strategy is at least $\costZero{k}(2)$ and the cost of the one strategy is at most
  $ \costOne{k}(2) + \sum_{k'=1}^{k-1} \costOne{k'}(3) $.
  Then, we get that
  \begin{equation*}
    \frac{C_{G_k}(s_{G_k}^0, \vecc{s}_{-G_k})}{C_{G_k}(s_{G_k}^1, \vecc{s}_{-G_k})}
    \ge \frac{\costZero{k}(2)}{\costOne{k}(2) + \sum_{k'=1}^{k-1} \costOne{k'}(3)}
    = \frac{\lambda \mu^k 2^d}{\mu^k 2^d + \sum_{k'=1}^{k-1} \mu^{k'} 3^d}
    > \lambda \parens{ \frac{1}{1 + \frac{1}{\mu - 1} \parens{\frac{3}{2}}^d } }
    .
  \end{equation*}
  By the definition of $\lambda$ and $\varepsilon(\mu)$, we obtain
  \begin{equation}\label{eq:nandsemantics_aux_2}
     \lambda \parens{ \frac{1}{1 + \frac{1}{\mu - 1} \parens{\frac{3}{2}}^d } }
    > 3^{\nicefrac{d}{2}} \parens{\frac{1}{1 + \frac{1}{\mu - 1} 3^d } }
    > 3^{\nicefrac{d}{2}} \parens{1 - \frac{1}{\mu - 1} 3^d  }
    = 3^{\nicefrac{d}{2}} - \varepsilon(\mu)
    >\alpha
    .
  \end{equation}
  Hence, changing from the zero to the one strategy would be an $\alpha$-improving move for $G_k$.
  Thus, $G_k$ must follow the NAND semantics and play her one strategy in $\vecc{s}$.

  We just showed that, in an $\alpha$-PNE, every gate player must follow the NAND semantics.
  This implies that there is \emph{at most one} $\alpha$-PNE where the input players play according to $\vecc{s}_X$, since the NAND semantics uniquely define the strategy of the remaining players.
  To conclude the proof, we must argue that this yields in fact an $\alpha$-PNE, meaning that the input players are also `locked' to their strategies in $\vecc{s}_X$ and have no incentive to deviate.
  To that end, let $\vecc{s}$ be a strategy profile PNE of $\mathcal{G}^d_\mu$ where the gate
  players follow the NAND semantics and let $X_i$ be any of the input players.
  Recall that every input bit $x_i$ is connected to exactly one gate, say
  $g_k$, while the other input is fixed to 1. To show that $X_i$ does not
  have an incentive to change her strategy, we consider two cases.

  \emph{\underline{Case 1:} $X_i$ plays her one strategy in $\vecc{s}$.}
  As $G_k$ follows the NAND semantics in $\vecc{s}$ and the other input of $g_k$ is fixed to $1$, we know that $G_k$ must be playing her zero strategy.
  Thus, incurring a cost of $\costOne{k}(2) = \mu^k 2^d$ for $X_i$.
  On the other hand, if $X_i$ changed to her zero strategy this would incur a cost of $\costZero{k}(2) = \lambda \mu^k 2^d $.
  Since $\lambda = 3^{\nicefrac{d}{2}}> 3^{\nicefrac{d}{2}} - \varepsilon(\mu) > \alpha$, staying at her one strategy is $\alpha$-dominating for $X_i$.

  \emph{\underline{Case 2:} $X_i$ plays her zero strategy in $\vecc{s}$.}
  As $G_k$ follows the NAND semantics in $\vecc{s}$ and the other input of $g_k$ is fixed to $1$, we know that $G_k$ must be playing her one strategy.
  This incurs a cost of $\costZero{k}(1) = \lambda \mu^k$ to $X_i$.
  On the other hand, if $X_i$ changed to her one strategy this would incur a cost of $\costOne{k}(3) = \mu^k 3^d$.
  Again, it is not $\alpha$-improving for $X_i$ to change her strategy, as $\frac{3^d}{\lambda} = 3^{\nicefrac{d}{2}} > 3^{\nicefrac{d}{2}} - \varepsilon(\mu) > \alpha$.
\end{proof}

We are now ready to show our main result of this section; using the circuit game
described above, we show NP-hardness of deciding whether approximate equilibria with
additional properties exist.
 As already mentioned in the introduction, we emphasize here that for each of the problems in \Cref{th:hardness_PNE}, parameters $d$, $\alpha$ and $z$ are fixed (i.e., not part of the input).
\begin{theorem}
\label{th:hardness_PNE}
The following problems are NP-hard, even for \emph{unweighted} polynomial congestion games of degree $d\geq 1$, for all $\alpha\in [1,3^{\nicefrac{d}{2}})$ and all $z>0$:
\begin{itemize}
\item ``Does there exist an $\alpha$-PNE in which a certain subset of players is playing a specific strategy profile?''
\item ``Does there exist an $\alpha$-PNE in which a certain resource is used by at least one player?''
\item ``Does there exist an $\alpha$-PNE in which a certain player has cost at most $z$?''
\end{itemize}
\end{theorem}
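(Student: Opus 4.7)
The plan is to obtain all three NP-hardness results by reduction from \textsc{Circuit Satisfiability} (or its logical negation in the third case), plugging the input circuit into the gadget $\mathcal{G}_\mu^d$ just constructed. Given an arbitrary circuit $C$, I would first put it in canonical form and build $\mathcal{G}_\mu^d$ for a value of $\mu$ large enough that both $\mu\ge 1+2\cdot 3^{d+\nicefrac{d}{2}}$ and $\alpha<3^{\nicefrac{d}{2}}-\varepsilon(\mu)$ (this is always possible because $\alpha<3^{\nicefrac{d}{2}}$ is fixed and $\varepsilon(\mu)\to 0$). \cref{lem:circuit_NANDsemantics} then delivers a bijection between the $\alpha$-PNE of $\mathcal{G}_\mu^d$ and the input vectors $x\in\{0,1\}^n$, and in the $\alpha$-PNE corresponding to $x$ the output player $G_1$ plays her one strategy $s_{G_1}^1$ if and only if $C(x)=1$.

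For the first decision problem I would simply ask whether there is an $\alpha$-PNE in which the singleton set $\{G_1\}$ plays the profile $(s_{G_1}^1)$; by the above this is equivalent to the satisfiability of $C$. For the second problem I would enlarge $\mathcal{G}_\mu^d$ by a fresh resource $r^\star$ with constant zero cost, adding it only to the strategy $s_{G_1}^1$. Since $r^\star$ contributes nothing to anyone's cost, the set of $\alpha$-PNE is unchanged, and $r^\star$ is used in a given $\alpha$-PNE exactly when $G_1$ plays $s_{G_1}^1$, i.e.\ exactly when $C$ is satisfiable.

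For the third problem I would introduce, alongside $r^\star$, a new single-strategy player $i^\star$ whose only strategy is $\{r^\star\}$, and this time set $c_{r^\star}(x)=\gamma x^d$ with $\gamma$ chosen so that $\gamma\le z<\gamma\,2^d$ (which is possible because $d\ge 1$ and $z>0$). The only strategies containing $r^\star$ are the strategy of $i^\star$ and $s_{G_1}^1$, so the load on $r^\star$ is $1$ when $G_1$ plays zero and $2$ when $G_1$ plays one, making $i^\star$'s cost equal to $\gamma\le z$ in the former case and $\gamma\,2^d>z$ in the latter. Thus the existence of an $\alpha$-PNE in which $i^\star$ has cost at most $z$ is equivalent to the existence of an input $x$ with $C(x)=0$, i.e.\ to the satisfiability of $\neg C$, which is NP-hard.

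The only nontrivial step, and the main obstacle, is to check that the enlargement of the third reduction does not destroy the conclusion of \cref{lem:circuit_NANDsemantics}. Only the cost of $G_1$'s one strategy is perturbed, by the additive amount $c_{r^\star}(2)=\gamma\,2^d$; in Case~1 of the lemma's proof the deviation ratio~\eqref{eq:nandsemantics_aux_1} only increases, while in Case~2 the ratio~\eqref{eq:nandsemantics_aux_2} for the output gate $k=1$ degrades from $\lambda=3^{\nicefrac{d}{2}}$ to $\lambda\mu/(\mu+\gamma)$, which still exceeds $\alpha$ as soon as $\mu$ is taken large enough in terms of $\gamma$ (hence in terms of $z$ and $d$). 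All gates other than $g_1$ and all input players are unaffected, because $r^\star$ is used only by $i^\star$ and $G_1$, so the rest of \cref{lem:circuit_NANDsemantics}'s proof goes through verbatim and the NAND-semantics bijection is preserved.
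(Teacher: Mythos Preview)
Your first two reductions are essentially identical to the paper's: reduce \textsc{Circuit Sat} to the question ``does $G_1$ play $s_{G_1}^1$ in some $\alpha$-PNE?'', and then attach a zero-cost marker resource to $s_{G_1}^1$ for the second item.

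For the third item you take a genuinely different route. The paper negates the circuit (so that the output gate is a NOT, hence has one input fixed to $1$), and then reads off the cost of the output player $G_1$ herself in equilibrium: when $G_1$ plays zero she pays exactly $\lambda\mu$, and when she plays one she pays exactly $2^d\mu$, giving a clean gap that can be shifted to any target $z$ by a global rescaling of all cost coefficients. This keeps the game $\mathcal{G}^d_\mu$ untouched, so \cref{lem:circuit_NANDsemantics} applies verbatim, but it relies on the structural fact that $g_1$'s second input is the static player. Your approach instead bolts on a ``sensor'' player $i^\star$ and a monomial resource $r^\star$ on $s_{G_1}^1$, and reads off $i^\star$'s cost; this avoids analysing $G_1$'s equilibrium cost and avoids the global rescaling, at the price of perturbing $G_1$'s one-strategy cost and hence having to re-examine the $k=1$ case of the lemma. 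Your verification there is correct: only Case~2 for $k=1$ degrades, to the ratio $\lambda\mu/(\mu+\gamma)$, which exceeds $\alpha$ once $\mu>\alpha\gamma/(\lambda-\alpha)$, and since $z$, $d$, $\alpha$ are fixed this is just a choice of constants. One small point worth making explicit is that $\gamma$ should be taken rational in the open interval $(z/2^d,z)$ so that the game has a finite description even when $z$ is not rational.
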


\begin{proof}
  For the first problem we reduce from \textsc{Circuit Satisfiability}: given a Boolean circuit with $n$ input bits and one output bit, is there an assignment of the input bits where the output of the circuit is $1$?
  This problem is NP-hard even for circuits consisting only of 2-input NAND gates \citep{PapadimitriouComplexity}.
  Let $C'$ be a Boolean circuit of 2-input NAND gates.
  We transform $C'$ into a valid circuit $C$ by connecting every input bit to a NOT gate and the output of this NOT gate to all gates connected to the input bit in $C'$.
  Thus, $C'(x) = C(\bar{x})$, where $\bar{x}$ denotes the vector obtained from $x \in \set{0,1}^n$ by flipping every bit.
  Hence, we have that $C'$ is a YES-instance to \textsc{Circuit Satisfiability} if and only if $C$ is a YES-instance.

  Let $\alpha \in [1,3^{\nicefrac{d}{2}})$, then there is an $\varepsilon > 0$ with $\alpha < 3^{\nicefrac{d}{2}} - \varepsilon$. We set $\mu =1 + \frac{3^{d + \nicefrac{d}{2}}}{ \min \set{\varepsilon,1}}$.
  For this choice of $\mu$, we obtain $\varepsilon(\mu) \le \varepsilon$ and thus $3^{\nicefrac{d}{2}} - \varepsilon(\mu) \ge 3^{\nicefrac{d}{2}} - \varepsilon > \alpha$.
  From the canonical form of $C$ we construct\footnote{To be precise, the description of the game $\mathcal{G}^d_\mu$ involves the quantities $\mu=1 + \frac{3^{d + \nicefrac{d}{2}}}{ \min \set{\varepsilon,1}}$ and $\lambda=3^{\nicefrac{d}{2}}$, which in general might be irrational. In order to incorporate this game into our reduction, it is enough to take a rational $\mu$ such that $\mu>1 + \frac{3^{d + \nicefrac{d}{2}}}{ \min \set{\varepsilon,1}}$,
  and a rational $\lambda$ such that $\alpha\left(1+\frac{1}{\mu-1}3^d\right)<\lambda<\frac{3^d}{\alpha\left(1+\frac{1}{\mu-1}3^d\right)}$. In this way, $\mathcal{G}^d_\mu$ is described entirely via rational numbers, while preserving the inequalities in \eqref{eq:nandsemantics_aux_1} and \eqref{eq:nandsemantics_aux_2}.} the game $\mathcal{G}^d_\mu$.
  The subset of players we are looking at is the output player $G_1$ and the specific strategy for $G_1$ is her one strategy $s_{G_1}^1$.
  We show that there is an $\alpha$-PNE where $G_1$ plays $s_{G_1}^1$ if and only if $C$ is a YES-instance to \textsc{Circuit Satisfiability}.

  Suppose there is a bit vector $x \in \set{0,1}^n$ such that $C(x) = 1$.
  Let $\vecc{s}_X$ be the strategy profile for the input players of $\mathcal{G}^d_\mu$ corresponding to $x$.
  Since $3^{\nicefrac{d}{2}} - \varepsilon(\mu) > \alpha$, \Cref{lem:circuit_NANDsemantics} holds for $\mathcal{G}^d_\mu$ and $\alpha$.
  Hence, the profile $\vecc{s}_X$ can be extended to an $\alpha$-PNE where $G_1$ plays according to $C(x)$. Thus, there is an $\alpha$-PNE where $G_1$ plays $s_{G_1}^1$.

  On the other hand, suppose for all bit vectors $x \in \set{0,1}^n$ it holds $C(x) = 0$.
  Again, by \Cref{lem:circuit_NANDsemantics} we know that for any choice of strategies for the input players, the only $\alpha$-PNE is a profile where the gate players follow the NAND semantics.
  Thus in this case, $G_1$ is playing $s_{G_1}^0$ in any $\alpha$-PNE.

  To show NP-hardness of the second problem, simply apply the same reduction as in the first problem, but now include an additional resource of cost $0$ to the strategy $s^1_{G_1}$ of the output player $G_1$. Now $C$ is a YES-instance to \textsc{Circuit Satisfiability} iff the additional resource is used by some player.


  The hardness of the third problem is shown by a reduction from \textsc{Circuit Satisfiability}, similar to the proof for the first problem. For $\alpha \in [1, 3^{\nicefrac{d}{2}})$ we choose $\mu$ as before, so that \Cref{lem:circuit_NANDsemantics} holds for $\alpha$ and the game $\mathcal{G}_\mu^d$ for a suitable circuit.
  Let $C'$ be an instance of \textsc{Circuit Satisfiability}. By negating the output of $C'$, we obtain a circuit $\overline{C'}$. As before we transform $\overline{C'}$ to a valid circuit $\overline{C}$, so that $C'(x) = \lnot \overline{C}(\bar{x})$ holds. From the canonical form of $\overline{C}$ we construct the game $\mathcal{G}^d_\mu$. Note that the output player $G_1$ of this game is the output of a gate, where one of the inputs is fixed to $1$, as we negated the output of $C'$ by connecting the output of $C'$ to a NOT gate.
  We show that there is an $\alpha$-PNE in $\mathcal{G}^d_\mu$, where $G_1$ has cost at most $\lambda \mu$, if and only if $C'$ is a YES-instance to \textsc{Circuit Satisfiability}.

  Suppose there is a bit vector $x \in \set{0,1}^n$ with $C'(x) = 1$, then there is a vector $\bar{x}$ with $\overline{C}(\bar{x}) = 0$. Let $\vecc{s}_X$ be the strategy profile for the input players of $\mathcal{G}^d_\mu$ corresponding to $\bar{x}$. By \Cref{lem:circuit_NANDsemantics} this profile can be extended to an $\alpha$-PNE,
  where $G_1$ is playing her zero strategy.
  As the gate players follow the NAND semantics in this PNE, the cost of player $G_1$ is exactly $\costZero{1}(1) = \lambda \mu$.

  If, on the other hand, for all bit vectors $x \in \set{0,1}^n$ we have $C'(x) = 0$, then for all $\bar{x} \in \set{0,1}^n$ we have $\overline{C}(\bar{x}) = 1$. Thus, using \Cref{lem:circuit_NANDsemantics} we know that in every $\alpha$-PNE $G_1$ plays her one strategy.
  As $G_1$ follows the NAND semantics in any $\alpha$-PNE and the player corresponding to one of the inputs of $g_1$ is the static player, we obtain that the cost of $G_1$ is exactly $\costOne{1}(2) = \mu 2^d$.
  Noticing that $\lambda = 3^{\nicefrac{d}{2}} < 2^d$, we have deduced the following: either $C'$ is a YES-instance, and $\mathcal{G}^d_\mu$ has an $\alpha$-PNE where $G_1$ has a cost of (at most) $\lambda \mu$; or $C'$ is a NO-instance, and for every $\alpha$-PNE of $\mathcal{G}^d_\mu$, $G_1$ has a cost of (at least) $2^d \mu$. This immediately implies that determining whether an $\alpha$-PNE exists in which a certain player has cost at most $z$ is NP-hard for $\lambda\mu<z<2^d\mu$.
  To prove that the problem remains NP-hard for an arbitrary $z>0$, simply take a rational $c$ such that $c\lambda\mu<z<c2^d\mu$ and rescale all costs of the resources in $\mathcal{G}^d_\mu$ by $c$.

\end{proof}

\section{Hardness of Existence} \label{sec:hardness_existence}

In this section we show that it is NP-hard to decide whether a polynomial congestion
game has an $\alpha$-PNE. For this we use a black-box reduction: our hard instance is obtained by
combining any
(weighted) polynomial congestion game $\mathcal{G}$ without $\alpha$-PNE (i.e., the
game from \Cref{sec:nonexistence}) with the circuit gadget of the previous section.
To achieve this, it would be convenient to make some assumptions on the game $\mathcal{G}$, which however do not influence the existence or nonexistence of approximate equilibria.

\paragraph*{Structural Properties of $\mathcal{G}$}
Without loss of generality, we assume that a weighted polynomial congestion game of degree $d$ has the following structural properties.

\begin{itemize}
  \item \emph{No player has an empty strategy.} If, for some player $i$, $\emptyset\in S_i$, then this strategy would be $\alpha$-dominating for $i$. Removing $i$ from the game description would not affect the (non)existence of (approximate) equilibria\footnote{By this we mean, if $\mathcal{G}$ has (resp.\ does not have) $\alpha$-PNE, then $\tilde{\mathcal{G}}$, obtained by removing player $i$ from the game, still has (resp.\ still does not have) $\alpha$-PNE.}.

  \item \emph{No player has zero weight.} If a player $i$ had zero weight, her
  strategy would not influence the costs of the strategies of the other players.
  Again, removing $i$ from the game description would not affect the (non)existence
  of equilibria.

  \item \emph{Each resource $e$ has a monomial cost function with a strictly positive coefficient}, i.e.\ $c_e(x)=a_e x^{k_e}$ where $a_e>0$ and $k_e\in\{0,\ldots,d\}$. If a resource had a more general cost function $c_e(x)=a_{e,0}+a_{e,1}x+\ldots+ a_{e,d} x^{d}$, we could split it into at most $d+1$ resources with (positive) monomial costs, $c_{e,0}(x)=a_{e,0}$, $c_{e,1}(x)=a_{e,1} x$, \ldots , $c_{e,d}(x)=a_{e,d} x^d$.
  These monomial cost resources replace the original resource, appearing on every strategy that included $e$.

  \item \emph{No resource $e$ has a constant cost function.} If a resource $e$ had a constant cost function $c_e(x)=a_{e,0}$, we could replace it by new resources having monomial cost.
   For each player $i$ of weight $w_i$, replace resource $e$ by a resource $e_i$ with monomial cost $c_{e_i}(x)=\frac{a_{e,0}}{w_i}x$, that is used exclusively by player $i$ on her strategies that originally had resource $e$.
   Note that $c_{e_i}(w_i) = a_{e,0}$, so that this modification does not change the player's costs, neither has an effect on the (non)existence of approximate equilibria.
   If a resource has cost function constantly equal to zero, we can simply remove it from the description of the game.
\end{itemize}
For a game having the above properties, we define the (strictly positive) quantities
\begin{equation}
  \label{eq:bb_hardness_aux}
  a_{\min} = \min_{e\in E} a_e,\quad W = \sum_{i\in N}w_i,\quad c_{\max} = \sum_{e\in E}c_e(W).
\end{equation}
Note that $c_{\max}$ is an upper bound on the cost of any player on any strategy profile.

\paragraph*{Rescaling of $\mathcal{G}$}
In our construction of the combined game we have to make sure that the weights of the players in $\mathcal{G}$ are smaller than the weights of the players in the circuit gadget. We introduce the following rescaling argument.

For any $\gamma\in(0,1]$ define the game $\tilde{\mathcal{G}}_\gamma$, where we rescale the player weights and resource cost coefficients in $\mathcal{G}$ as
\begin{equation}
  \label{eq:bb_hardness_scaling}
  \tilde{a}_e=\gamma^{d+1-k_e} a_e,\quad \tilde{w}_i=\gamma w_i,\quad \tilde{c}_e(x)=\tilde{a}_e x^{k_e}.
\end{equation}
This changes the quantities in \eqref{eq:bb_hardness_aux} for $\tilde{\mathcal{G}}_\gamma$ to (recall that $k_e\geq 1$)
\begin{align*}
\tilde{a}_{\min}&=\min_{e\in E} \tilde{a}_e=\min_{e\in E}\gamma^{d+1-k_e} a_e\geq\gamma^{d}\min_{e\in E}a_e=\gamma^d a_{\min},\\
\tilde{W}&=\sum_{i\in N}\tilde{w}_i=\sum_{i\in N}\gamma w_i=\gamma W,\\
\tilde{c}_{\max}&=\sum_{e\in E}\tilde{c}_e(\tilde{W})=\sum_{e\in E}\tilde{a}_e(\gamma W)^{k_e}= \sum_{e\in E}\gamma^{d+1}a_e W^{k_e}=\gamma^{d+1}\sum_{e\in E}c_e(W)=\gamma^{d+1}c_{\max}.
\end{align*}
In $\tilde{\mathcal{G}}_\gamma$ the player costs are all uniformly scaled as $\tilde{C}_i(\vecc{s})=\gamma^{d+1} C_i(\vecc{s})$, so that the Nash dynamics and the (non)existence of equilibria are preserved.

The next lemma formalizes the combination of both game gadgets and, furthermore, establishes the gap-introduction in the equilibrium factor. Using it, we will derive our key hardness tool of~\cref{th:hardness_existence}.
\begin{lemma}\label{th:bb_hardness}
Fix any integer $d\geq 2$ and real $\alpha\geq 1$. Suppose there exists a weighted polynomial congestion game $\mathcal{G}$ of degree $d$ that does \emph{not} have an $\alpha$-PNE. Then, for any circuit $C$ there exists a game $\tilde{\mathcal{G}}_C$ with the following property: the sets of $\alpha$-PNE and exact PNE of $\tilde{\mathcal{G}}_C$ coincide and are in one-to-one correspondence with the set of satisfying assignments of $C$. In particular, one of the following holds: either
\begin{enumerate}
\item $C$ has a satisfying assignment, in which case $\tilde{\mathcal{G}}_C$ has an exact PNE (and thus, also an $\alpha$-PNE); or

\item $C$ has no satisfying assignments, in which case $\tilde{\mathcal{G}}_C$ has no $\alpha$-PNE (and thus, also no exact PNE).
\end{enumerate}
\end{lemma}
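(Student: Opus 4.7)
The plan is to build $\tilde{\mathcal{G}}_C$ by taking the disjoint union of the circuit gadget $\mathcal{G}^d_\mu$ for $C$ (from~\cref{sec:circuit}) and an appropriately rescaled copy $\tilde{\mathcal{G}}_\gamma$ of the nonexistence game $\mathcal{G}$, and then coupling them through a single \emph{trigger} resource controlled by the output player $G_1$. Concretely, after applying the preliminary reductions to $\mathcal{G}$ (no empty strategies, no zero weights, strictly positive monomial costs, no constants) and fixing a scaling factor $\gamma\in(0,1]$ to be determined later, I choose $\mu$ large enough that~\cref{lem:circuit_NANDsemantics} applies to our fixed $\alpha$, build $\mathcal{G}^d_\mu$ from $C$, introduce a fresh resource $t$ with cost $c_t(x)=\beta x^d$ (with $\beta>0$ to be chosen), add $t$ to the zero-strategy $s^0_{G_1}$, and give each player $i$ of $\tilde{\mathcal{G}}_\gamma$ an extra ``safe'' strategy $\hat{s}_i=\{t\}$.

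The analysis then reduces to three structural claims. First, by keeping $\beta$ small compared to $\mu$, the extra $c_t(\cdot)$ contribution to $G_1$'s zero-strategy cost is only a tiny perturbation of the estimates~\eqref{eq:nandsemantics_aux_1}--\eqref{eq:nandsemantics_aux_2} in the proof of~\cref{lem:circuit_NANDsemantics}, so every $\alpha$-PNE still forces the circuit-gadget players to follow the NAND semantics, and the output player plays $s^1_{G_1}$ precisely when $C(x)=1$. Second, when $G_1$ plays her one-strategy (so her weight does not sit on $t$), the cost of $\hat{s}_i$ is at most $\beta\tilde{W}^d$ regardless of the other $\tilde{\mathcal{G}}_\gamma$-players' choices, while every original strategy of $i$ has cost bounded below by a positive constant scaling like $\gamma^{d+1}$; choosing $\beta$ so that $\alpha\beta\tilde{W}^d$ is strictly below this bound makes $\hat{s}_i$ the unique $\alpha$-dominating strategy for each $i$. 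Third, when $G_1$ plays her zero-strategy, the load on $t$ is at least $1$ (from $G_1$'s own weight), so the cost of $\hat{s}_i$ is at least $\beta$; choosing $\beta>\alpha\tilde{c}_{\max}=\alpha\gamma^{d+1}c_{\max}$ then makes switching from $\hat{s}_i$ to any original strategy an $\alpha$-improving deviation, so no $\tilde{\mathcal{G}}_\gamma$-player plays safe in any $\alpha$-PNE.

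The hard part is the parameter juggling underlying these three claims: the upper and lower bounds on $\beta$ are jointly satisfiable exactly when $\gamma^d$ is sufficiently small (depending only on $\mathcal{G}$ and $\alpha$), and one has to verify that $\mu,\gamma,\beta$ can be chosen consistently — and, using the trick in the footnote of~\cref{th:hardness_PNE}, even as rationals, so that $\tilde{\mathcal{G}}_C$ is describable over the rationals. Once the parameters are fixed, the bijection falls out. For any satisfying assignment $x$, the profile in which the circuit players emulate $C$ on input $x$ and every $\tilde{\mathcal{G}}_\gamma$-player plays her dominating $\hat{s}_i$ is an exact PNE. Conversely, if $\vecc{s}$ were an $\alpha$-PNE of $\tilde{\mathcal{G}}_C$ whose induced input vector $x$ satisfies $C(x)=0$, then $G_1$ would play $s^0_{G_1}$, so by the third claim every $\tilde{\mathcal{G}}_\gamma$-player would play some original strategy, and the restriction of $\vecc{s}$ to $\tilde{\mathcal{G}}_\gamma$ would itself be an $\alpha$-PNE of $\tilde{\mathcal{G}}_\gamma$, contradicting the hypothesis on $\mathcal{G}$. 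Since every inequality used is strict, the $\alpha$-PNE and exact PNE of $\tilde{\mathcal{G}}_C$ coincide, and both are in bijection with the satisfying assignments of $C$.
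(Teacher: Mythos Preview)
Your proposal is correct and follows the same high-level architecture as the paper: merge the circuit gadget $\mathcal{G}^d_\mu$ with a rescaled copy $\tilde{\mathcal{G}}_\gamma$ of the nonexistence game, couple them only through the output player $G_1$, and give each $\tilde{\mathcal{G}}_\gamma$-player an escape strategy that is attractive precisely when $C(x)=1$. The three structural claims you list are the right ones, and the parameter window you identify ($\alpha\gamma^{d+1}c_{\max}<\beta<\gamma L/(\alpha W^d)$, forcing $\gamma^d$ small) indeed admits a consistent choice.

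Where you differ from the paper is in the coupling mechanism. The paper augments $G_1$'s \emph{one} strategy with \emph{every} resource of $\tilde{\mathcal{G}}_\gamma$ and adds a separate constant-cost dummy resource as the escape hatch: when $C(x)=1$, $G_1$ floods $\tilde{\mathcal{G}}_\gamma$ with her unit weight, making the original strategies expensive and pushing the $\tilde{\mathcal{G}}_\gamma$-players onto the dummy; when $C(x)=0$, the dummy is $\alpha$-dominated and the $\tilde{\mathcal{G}}_\gamma$-players are trapped in their original (non-equilibrium) subgame. You instead augment $G_1$'s \emph{zero} strategy with a single fresh monomial resource $t$ that doubles as the escape: when $C(x)=1$, $G_1$ is off $t$, so $t$ is cheap and the $\tilde{\mathcal{G}}_\gamma$-players sit there; when $C(x)=0$, $G_1$'s unit weight on $t$ makes the escape expensive. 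The two constructions are dual, and your version is arguably tidier --- one extra resource rather than $|E(\tilde{\mathcal{G}}_\gamma)|+1$, and no constant-cost function (so the combined game stays within the ``strictly positive nonconstant monomial'' class the preliminary reductions set up). The paper's version, on the other hand, keeps $G_1$'s zero strategy untouched, so the relevant ratio computation for~\eqref{eq:nandsemantics_aux_1} goes through verbatim rather than needing the ``tiny perturbation'' argument you invoke for Claim~1. Either way the bijection and the gap-introducing property follow identically.
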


\begin{proof}
Let $\mathcal{G}$ be a congestion game as in the statement of the theorem having the above mentioned structural properties.
Recalling that weighted polynomial congestion games of degree $d$ have $d$-PNE \cite{Caragiannis:2019aa}, this implies that $\alpha<d<3^{\nicefrac{d}{2}}$. Fix some $0<\varepsilon<3^{\nicefrac{d}{2}}-\alpha$ and take $\mu\geq 1+\frac{3^{d + \nicefrac{d}{2}}}{ \min \set{\varepsilon,1}}$; in this way $\alpha<3^{\nicefrac{d}{2}}-\varepsilon\leq3^{\nicefrac{d}{2}}-\varepsilon(\mu)$.

Given a circuit $C$ we construct the game $\tilde{\mathcal{G}}_C$ as follows.
We combine the game $\mathcal{G}^d_\mu$ whose Nash dynamics model the
NAND semantics of $C$, as described in \cref{sec:circuit}, with the game $\tilde{\mathcal{G}}_\gamma$
obtained from $\mathcal{G}$ via the aforementioned rescaling.
We choose $\gamma \in (0,1]$ sufficiently small such that the following three inequalities hold for the quantities in \eqref{eq:bb_hardness_aux} for $\mathcal{G}$:
\begin{equation}
  \label{eq:bb_hardness_gamma}
  \gamma W<1,
  \quad \gamma\sum_{e\in E}a_e<\frac{\mu}{\mu-1}\left(\frac{3}{2}\right)^d,
  \quad \gamma\alpha^2<\frac{a_{\min}}{c_{\max}}.
\end{equation}
The set of players in $\tilde{\mathcal{G}}_C$ corresponds to the (disjoint) union of the
static, input and gate players in $\mathcal{G}^d_\mu$ (which all have weights $1$) and the players in
$\tilde{\mathcal{G}}_\gamma$ (with weights $\tilde w_i$). We also consider a new dummy resource with constant cost
$c_{\mathrm{dummy}}(x)=\frac{\tilde{a}_{\min}}{\alpha}$. Thus, the set of resources
corresponds to the (disjoint) union of the gate resources $0_k,1_k$ in $\mathcal{G}^d_\mu$,
the resources in $\tilde{\mathcal{G}}_\gamma$, and the dummy resource. We augment the strategy space
of the players as follows:
\begin{itemize}
\item each input player or gate player of $\mathcal{G}^d_\mu$ that is \emph{not} the output player $G_1$ has the same strategies as in $\mathcal{G}^d_\mu$ (i.e.\ either the zero or the one strategy);
\item the zero strategy of the output player $G_1$ is the same as in $\mathcal{G}^d_\mu$, but her one strategy is augmented with \emph{every} resource in $\tilde{\mathcal{G}}_\gamma$; that is, $s^1_{G_1}=\{1_1\}\cup E(\tilde{\mathcal{G}}_\gamma)$;
\item each player $i$ in $\tilde{\mathcal{G}}_\gamma$ keeps her original strategies as in $\tilde{\mathcal{G}}_\gamma$, and gets a new dummy strategy $s_{i,\mathrm{dummy}}=\{\mathrm{dummy}\}$.
\end{itemize}
A graphical representation of the game $\tilde{\mathcal{G}}_C$ can be seen
in~\cref{fig:mergegame}.

\begin{figure}[t]
  \centering
   \includegraphics[trim={0 23.5cm 12cm 0}, clip]{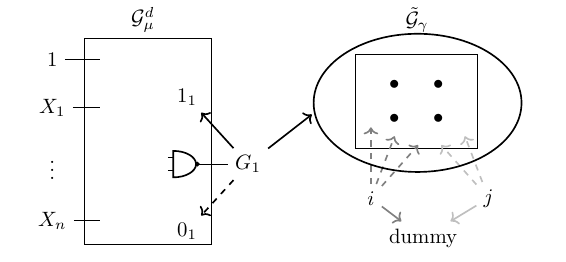}
  \caption{Merging a circuit game (on the left) and a game without approximate equilibria (on the right). Changes to the subgames are indicated by solid arrows.
  The new one strategy of $G_1$ consists of $1_1$ and all resources in $\tilde{\mathcal{G}}_\gamma$, while the zero strategy stays unchanged. The players of $\tilde{\mathcal{G}}_\gamma$ get a new strategy (the dummy resource), and keep their old strategies playing in $\tilde{\mathcal{G}}_\gamma$ .
  }
  \label{fig:mergegame}
\end{figure}

To finish the proof, we need to show that every $\alpha$-PNE of $\tilde{\mathcal{G}}_C$ is an exact PNE and corresponds to a satisfying assignment of $C$; and, conversely, that every satisfying assignment of $C$ gives rise to an exact PNE of $\tilde{\mathcal{G}}_C$ (and thus, an $\alpha$-PNE as well).

Suppose that $\vecc{s}$ is an $\alpha$-PNE of $\tilde{\mathcal{G}}_C$, and let $\vecc{s}_X$ denote the strategy profile restricted to the input players of $\mathcal{G}^d_\mu$. Then, as in the proof of \cref{lem:circuit_NANDsemantics}, every gate player that is not the output player must respect the NAND semantics, and this is an $\alpha$-dominating strategy. For the output player, either $\vecc{s}_X$ is a non-satisfying assignment, in which case the zero strategy of $G_1$ was $\alpha$-dominating,
and this remains $\alpha$-dominating in the game $\tilde{\mathcal{G}}_C$ (since only the cost of the one strategy increased for the output player); or $\vecc{s}_X$ is a satisfying assignment.
In the second case, we now argue that the one strategy of $G_1$ remains $\alpha$-dominating. The cost of the output player on the zero strategy is at least $c_{0_1}(2)=\lambda\mu2^d$, and the cost on the one strategy is at most
\[ c_{1_1}(2)+\sum_{e\in E}\tilde{c}_e(1+\gamma W)=\mu 2^d+\sum_{e\in E}\gamma^{d+1-k_e}a_e(1+\gamma W)^{k_e}<\mu 2^d+\gamma\sum_{e\in E}a_e 2^d<\mu 2^d+\frac{\mu}{\mu-1}3^d,
\]
where we used the first and second bounds from \eqref{eq:bb_hardness_gamma}. Thus, the ratio between the costs is at least
\[\frac{\lambda\mu2^d}{\mu 2^d+\frac{\mu}{\mu-1}3^d}=\lambda\left(\frac{1}{1+\frac{1}{\mu-1}\left(\frac{3}{2}\right)^d}\right)
    > 3^{\nicefrac{d}{2}} \parens{\frac{1}{1 + \frac{1}{\mu - 1} 3^d } }
    > 3^{\nicefrac{d}{2}} - \varepsilon(\mu)
    >\alpha.
\]

Given that the gate players must follow the NAND semantics, the input players are also locked to their strategies (i.e.\ they have no incentive to change) due to the proof of \cref{lem:circuit_NANDsemantics}. The only players left to consider are the players from $\tilde{\mathcal{G}}_\gamma$.
First we show that, since $\vecc{s}$ is an $\alpha$-PNE, the output player must be playing her one strategy.
If this was not the case, then each dummy strategy of a player in $\tilde{\mathcal{G}}_\gamma$ is $\alpha$-dominated by any other strategy: the dummy strategy incurs a cost of $\frac{\tilde{a}_{\min}}{\alpha}\geq\gamma^d\frac{a_{\min}}{\alpha}$, whereas any other strategy would give a cost of at most $\tilde{c}_{\max}=\gamma^{d+1}c_{\max}$ (this is because the output player is not playing any of the resources in $\tilde{\mathcal{G}}_\gamma$). The ratio between the costs is thus at least
\[ \frac{\gamma^da_{\min}}{\gamma^{d+1}c_{\max}\alpha}=\frac{a_{\min}}{\gamma c_{\max}\alpha}>\alpha.
\]
Since the dummy strategies are $\alpha$-dominated, the players in $\tilde{\mathcal{G}}_\gamma$ must be playing on their original sets of strategies. The only way for $\vecc{s}$ to be an $\alpha$-PNE would be if $\mathcal{G}$ had an $\alpha$-PNE to begin with, which yields a contradiction.
Thus, the output player is playing the one strategy (and hence, is present in every resource in $\tilde{\mathcal{G}}_\gamma$). In such a case, we can conclude that each dummy strategy is now $\alpha$-dominating.
If a player $i$ in $\tilde{\mathcal{G}}_\gamma$ is not playing a dummy strategy, she is playing at least one resource in $\tilde{\mathcal{G}}_\gamma$, say resource $e$. Her cost is at least  $\tilde{c}_e(1+\tilde{w}_i)=\tilde{a}_e(1+\tilde{w}_i)^{k_e}>\tilde{a}_e\geq\tilde{a}_{\min}$ (the strict inequality holds since, by the structural properties of our game, all of $\tilde{a}_e$, $\tilde{w}_i$ and $k_e$ are strictly positive quantities). On the other hand, the cost of playing the dummy strategy is $\frac{\tilde{a}_{\min}}{\alpha}$.
Thus, the ratio between the costs is greater than $\alpha$.

We have concluded that, if $\vecc{s}$ is an $\alpha$-PNE of $\tilde{\mathcal{G}}_C$, then $\vecc{s}_X$ corresponds to a satisfying assignment of $C$, all the gate players are playing according to the NAND semantics, the output player is playing the one strategy, and all players of $\tilde{\mathcal{G}}_\gamma$ are playing the dummy strategies. In this case, we also have observed that each player's current strategy is $\alpha$-dominating, so the strategy profile is an exact PNE.
To finish the proof, we need to argue that every satisfying assignment gives rise to a unique $\alpha$-PNE. Let $\vecc{s}_X$ be the strategy profile corresponding to this assignment for the input players in $\mathcal{G}^d_\mu$.
Then, as before, there is one and exactly one $\alpha$-PNE $\vecc{s}$ in $\tilde{\mathcal{G}}_C$ that agrees with $\vecc{s}_X$; namely, each gate player follows the NAND semantics, the output player plays the one strategy, and the players in $\tilde{\mathcal{G}}_\gamma$ play the dummy strategies.
\end{proof}

\begin{theorem}
\label{th:hardness_existence}
For any integer $d\geq 2$ and real $\alpha\geq 1$, suppose there exists a weighted polynomial congestion game which does not have an $\alpha$-PNE. Then it is NP-hard to decide whether (weighted) polynomial congestion games of degree $d$ have an
$\alpha$-PNE. If in addition $\alpha$ is polynomial-time computable,\footnote{Recall the definition of polynomial-time computable real number at the end of \cref{sec:model}.}
the aforementioned problem is NP-complete.
\end{theorem}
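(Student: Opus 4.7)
The plan is to derive both statements almost directly from the black-box construction of \cref{th:bb_hardness}. For the NP-hardness part, I would set up a polynomial-time many-one reduction from \textsc{Circuit Satisfiability}. Since $d$ and $\alpha$ are fixed parameters (not part of the input) and the hypothesis guarantees the existence of a weighted polynomial congestion game $\mathcal{G}$ of degree $d$ with no $\alpha$-PNE, we may treat $\mathcal{G}$ as a fixed object of constant size (after, if needed, massaging it so that it satisfies the structural properties listed at the beginning of \cref{sec:hardness_existence}; this does not affect (non)existence of $\alpha$-PNE). Given any Boolean circuit $C$ as input, I would invoke \cref{th:bb_hardness} to produce the combined game $\tilde{\mathcal{G}}_C$. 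The construction from \cref{sec:circuit} that builds $\mathcal{G}^d_\mu$ is clearly polynomial in the size of $C$, and the rescaling/merging step in \cref{th:bb_hardness} only adds a constant (depending on $\mathcal{G}$) number of players, resources and strategies, together with a choice of $\gamma\in(0,1]$ satisfying the finitely many strict inequalities in \eqref{eq:bb_hardness_gamma}, which can be taken as some fixed rational. By \cref{th:bb_hardness}, $\tilde{\mathcal{G}}_C$ has an $\alpha$-PNE iff $C$ is satisfiable, completing the reduction.

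The only subtlety I anticipate is that the quantities $\mu$ and $\lambda=3^{\nicefrac{d}{2}}$ arising in the circuit gadget may be irrational, so $\tilde{\mathcal{G}}_C$ is not obviously a legitimate instance of our decision problem under the rationality conventions of \cref{sec:model}. I would handle this exactly as in the footnote of the proof of \cref{th:hardness_PNE}: pick any rationals $\mu',\lambda'$ sufficiently close to the prescribed values so that the strict inequalities used in \eqref{eq:nandsemantics_aux_1}, \eqref{eq:nandsemantics_aux_2} and in the proof of \cref{th:bb_hardness} are preserved. Since those inequalities are all strict and finite in number, such rationals exist and can be found in time polynomial in the description of $d$ and $\alpha$ (in particular, without any dependence on the input circuit). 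This makes $\tilde{\mathcal{G}}_C$ a fully rational instance of the $\alpha$-PNE existence problem, without altering its equilibrium structure.

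For NP-completeness under the additional hypothesis that $\alpha$ is polynomial-time computable, it remains to exhibit a polynomial-time verifier. A natural certificate is a strategy profile $\vecc{s}$ of the candidate game. Since strategies are given explicitly and all weights and cost-function coefficients are rational, for every player $i$ and every deviation $s_i'\in S_i$ the two rational costs $C_i(\vecc{s})$ and $C_i(s_i',\vecc{s}_{-i})$ can be computed in polynomial time. Verifying condition \eqref{eq:approx_PNE_def} then reduces to testing, for each of the polynomially many pairs $(i,s_i')$, whether a specific nonnegative rational quotient $q=C_i(\vecc{s})/C_i(s_i',\vecc{s}_{-i})$ (with the obvious convention when the denominator is zero) satisfies $q\leq\alpha$. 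By the polynomial-time computability assumption on $\alpha$, the upper Dedekind cut $R_\alpha$ is decidable in polynomial time, so the predicate $q\leq\alpha$, i.e.\ $q\notin R_\alpha$, can be checked in polynomial time. Conjoining all these polynomially many checks gives the desired verifier, placing the problem in NP and thus establishing NP-completeness.

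I do not expect any serious technical obstacle beyond the bookkeeping of the previous paragraph; the real content has already been concentrated in \cref{th:bb_hardness}. The mildly delicate point is the handover of rational approximations for $\mu$ and $\lambda$, and making explicit that the verifier for $\alpha$-PNE uses only rational arithmetic together with a single Dedekind-cut query per deviation.
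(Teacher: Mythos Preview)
Your plan is essentially the paper's own proof, and the NP-membership argument is fine. There is, however, one rationality issue you overlooked. You correctly flag $\mu$ and $\lambda=3^{\nicefrac{d}{2}}$ as potential irrationals in the circuit gadget, but $\alpha$ itself also enters the description of $\tilde{\mathcal{G}}_C$: in the construction of \cref{th:bb_hardness} the dummy resource has constant cost $\tilde a_{\min}/\alpha$, and the third inequality in \eqref{eq:bb_hardness_gamma} reads $\gamma\alpha^2<a_{\min}/c_{\max}$. Since the theorem is stated for \emph{arbitrary} real $\alpha$ (possibly irrational, even uncomputable), you cannot write $\tilde a_{\min}/\alpha$ into your instance, and your phrase ``in time polynomial in the description of $d$ and $\alpha$'' is problematic when $\alpha$ has no finite description.

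The paper's fix is a small but necessary extra step: first perturb $\mathcal{G}$ so that all weights and coefficients are rational (the nonexistence of an $\alpha$-PNE is a finite conjunction of strict polynomial inequalities, hence open), and then pick a \emph{rational} $\bar\alpha>\alpha$ close enough that $\mathcal{G}$ still has no $\bar\alpha$-PNE. One then runs the entire \cref{th:bb_hardness} construction with $\bar\alpha$ in place of $\alpha$, so that the dummy cost $\tilde a_{\min}/\bar\alpha$ and the bound $\gamma\bar\alpha^2<a_{\min}/c_{\max}$ involve only rationals. Since $\bar\alpha>\alpha$, the resulting game has an $\alpha$-PNE iff it has a $\bar\alpha$-PNE iff $C$ is satisfiable. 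With this amendment your proposal matches the paper.
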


\begin{proof}
Let $d\geq 2$ and $\alpha\geq 1$. Let $\mathcal{G}$ be a weighted polynomial congestion game of degree $d$ that has no $\alpha$-PNE; this means that for every strategy profile $\vecc{s}$ there exists a player $i$ and a strategy $s'_i\neq s_i$ such that $C_i(s_i,\vecc{s}_{-i})>\alpha \cdot C_i \parens{s'_i,\vecc{s}_{-i}}$.
Note that the functions $C_i$ are polynomials of degree $d$ and hence they are continuous on the weights $w_i$ and the coefficients $a_e$ appearing on the cost functions. Hence, any arbitrarily small perturbation of the $w_i,a_e$ does not change the sign of the above inequality. Thus, without loss of generality, we can assume that all $w_i,a_e$ are rational numbers. By a similar reasoning, we can let $\bar{\alpha}>\alpha$ be a rational number sufficiently close to $\alpha$ such that $\mathcal{G}$ still does not have an $\bar{\alpha}$-PNE.

Next, we consider the game $\tilde{\mathcal{G}}_\gamma$ obtained from $\mathcal{G}$ by rescaling, as in the proof of \cref{th:bb_hardness}, but with $\bar{\alpha}$ playing the role of $\alpha$. Notice that the rescaling is done via the choice of a sufficiently small $\gamma$, according to \eqref{eq:bb_hardness_gamma}, and hence in particular we can take $\gamma$ to be a sufficiently small rational. In this way, all the player weights and coefficients in the cost of resources are rational numbers scaled by a rational number and hence rationals.

Finally, we are able to provide the desired NP reduction from \textsc{Circuit Satisfiability}. Given a Boolean circuit $C'$ built with 2-input NAND gates, transform it into a valid circuit $C$ in canonical form. From $C$ we can construct in polynomial time the game $\tilde{\mathcal{G}}_C$ as described in the proof of \cref{th:bb_hardness}. The `circuit part', i.e.\ the game $\mathcal{G}^d_\mu$, is obtained in polynomial time from $C$, as in the proof of \cref{th:hardness_PNE}; the description of the game $\tilde{\mathcal{G}}_\gamma$ involves only rational numbers,
and hence the game can be represented by a constant number of bits (i.e.\ independent of the circuit $C$). Similarly, the additional dummy strategy has a constant delay of $\nicefrac{\tilde{a}_{\min}}{\bar{\alpha}}$,
and can be represented with a single rational number. Merging both $\mathcal{G}^d_\mu$ and $\tilde{\mathcal{G}}_\gamma$ into a single game $\tilde{\mathcal{G}}_C$ can be done in linear time. Since $C$ has a satisfying assignment iff $\tilde{\mathcal{G}}_C$ has an $\alpha$-PNE (or $\bar{\alpha}$-PNE), this concludes that the problem described is NP-hard.

If $\alpha$ is polynomial-time computable, the problem is clearly in NP: given a weighted polynomial congestion game of degree $d$ and a strategy profile $\vecc{s}$, one can check if $\vecc{s}$ is an $\alpha$-PNE by computing the ratios between the cost of each player in $\vecc{s}$ and their cost for each possible deviation, and comparing these ratios with $\alpha$.
\end{proof}

Combining the hardness result of \cref{th:hardness_existence} together with the
nonexistence result of \cref{th:nonexistence} we get the following corollary, which
is the main result of this section.

\begin{corollary}
\label{th:hardness_existence2}
For any integer $d\geq 2$ and real $\alpha\in[1,\alpha(d))$, it is NP-hard to decide whether (weighted) polynomial congestion games of degree $d$ have an
$\alpha$-PNE, where $\alpha(d)=\tilde{\varOmega}(\sqrt{d})$ is the same
as in~\cref{th:nonexistence}. If in addition $\alpha$ is polynomial-time computable, the aforementioned problem is NP-complete.
\end{corollary}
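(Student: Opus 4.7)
The plan is a direct composition of the two main theorems already established in the section. Theorem~\ref{th:nonexistence} hands us, for every integer $d\geq 2$ and every $\alpha<\alpha(d)$, an explicit weighted polynomial congestion game of degree $d$ (an appropriate instance from the family $\mathcal{G}^d_{(n,k,w,\beta)}$) that does not admit any $\alpha$-approximate PNE. This is exactly the hypothesis required by Theorem~\ref{th:hardness_existence}; feeding that witness into the black-box reduction proved there yields NP-hardness of deciding the existence of $\alpha$-PNE in degree-$d$ weighted polynomial congestion games for the same $\alpha$. The asymptotic estimate $\alpha(d)=\tilde{\varOmega}(\sqrt{d})$ is inherited verbatim.

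Concretely, I would argue as follows. Fix $d\geq 2$ and $\alpha\in[1,\alpha(d))$. Since $\alpha(d)$ is defined as a supremum in~\eqref{eq:nonexist_ratio_opt}, there exist parameters $(n,k,w,\beta)$ such that both ratios appearing in~\eqref{eq:nonexist_ratio_opt} exceed $\alpha$; hence the corresponding game $\mathcal{G}^d_{(n,k,w,\beta)}$ has no $\alpha$-approximate PNE. Theorem~\ref{th:hardness_existence} applied to this $\mathcal{G}$, together with the chosen $d$ and $\alpha$, immediately gives NP-hardness. For the NP-completeness part, I would simply note that the second clause of Theorem~\ref{th:hardness_existence} places the decision problem in NP whenever $\alpha$ is polynomial-time computable, so NP-completeness follows in that case.

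There is essentially no technical obstacle here — the work was already done in Theorems~\ref{th:nonexistence} and~\ref{th:hardness_existence}. The only mild point worth flagging is that $\alpha$ is allowed to vary across the whole interval $[1,\alpha(d))$ rather than being fixed to some specific value; but Theorem~\ref{th:hardness_existence} is stated with $\alpha$ as a free parameter, and the witness game produced by Theorem~\ref{th:nonexistence} can be selected as a function of $\alpha$, so the combination is seamless. Likewise, the rationality/perturbation argument that ensures the witness game $\mathcal{G}$ has a rational description (needed so that the reduction in Theorem~\ref{th:hardness_existence} runs in polynomial time) is already internal to that theorem's proof and does not need to be redone.
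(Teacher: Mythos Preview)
Your proposal is correct and matches the paper's approach exactly: the paper itself does not give a separate proof of this corollary, merely stating that it follows by combining the nonexistence result of Theorem~\ref{th:nonexistence} with the hardness result of Theorem~\ref{th:hardness_existence}. Your write-up spells out the composition in slightly more detail (selecting parameters $(n,k,w,\beta)$ witnessing nonexistence for the given $\alpha$, then invoking both clauses of Theorem~\ref{th:hardness_existence}), but the argument is identical in substance.
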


Notice that, in the proof of \cref{th:bb_hardness,th:hardness_existence}, we constructed a polynomial-time reduction from \textsc{Circuit Satisfiability} to the problem of determining whether a given congestion game has an $\alpha$-PNE. Not only does this reduction map YES-instances of one problem to YES-instances of the other, but it also induces a bijection between the sets of satisfying assignments of a circuit $C$ and $\alpha$-PNE of the corresponding game $\tilde{\mathcal{G}}_C$. That is, this reduction is \emph{parsimonious}. As a consequence, we can directly lift hardness of problems associated with counting satisfying assignments to \textsc{Circuit Satisfiability} into problems associated with counting $\alpha$-PNE of congestion games:

\begin{corollary}
\label{th:hardness_counting}
Let $k\geq 1$ and $d\geq 2$ be integers and $\alpha\in[1,\alpha(d))$ where $\alpha(d)=\tilde{\varOmega}(\sqrt{d})$ is the same as in~\cref{th:nonexistence}. Then

\begin{itemize}
\item it is {\#}P-hard to count the number of $\alpha$-PNE of (weighted) polynomial congestion games of degree $d$;
\item it is NP-hard to decide whether a (weighted) polynomial congestion game of degree $d$ has at least $k$ distinct $\alpha$-PNE.
\end{itemize}
\end{corollary}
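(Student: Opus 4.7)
The plan is to leverage the key observation already highlighted in the paragraph preceding the corollary: the reduction from \textsc{Circuit Satisfiability} constructed in \cref{th:bb_hardness,th:hardness_existence} is \emph{parsimonious}, establishing a bijection between the satisfying assignments of a circuit $C$ and the $\alpha$-approximate PNE of the combined game $\tilde{\mathcal{G}}_C$. Once this bijection is taken as given, both bullets follow by lifting standard hardness results from the circuit world. Throughout I would use the game from \cref{th:nonexistence} (for any $\alpha<\alpha(d)$) as the nonexistence instance $\mathcal{G}$ fed into \cref{th:bb_hardness}, exactly as in the proof of \cref{th:hardness_existence2}.

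For the first bullet, I would recall that \#\textsc{Circuit-Sat} (counting the satisfying assignments of a Boolean circuit) is \#P-hard. By parsimoniousness, an oracle that on input $\tilde{\mathcal{G}}_C$ returns the number of $\alpha$-PNE of $\tilde{\mathcal{G}}_C$ returns exactly the number of satisfying assignments of $C$. Since the construction of $\tilde{\mathcal{G}}_C$ from $C$ is polynomial-time (as verified in \cref{th:hardness_existence}), this produces a polynomial-time Turing reduction from \#\textsc{Circuit-Sat} to the \#$\alpha$-PNE counting problem, yielding the claimed \#P-hardness.

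For the second bullet, the case $k=1$ is exactly \cref{th:hardness_existence2}. For $k\geq 2$ I would use a standard padding trick on the circuit: given a \textsc{Circuit Satisfiability} instance $C'$ on input bits $x_1,\dots,x_n$, construct a circuit $C$ with $n+\lceil\log_2 k\rceil$ input bits that ignores the additional bits and evaluates $C'(x_1,\dots,x_n)$. Then $C$ has $m\cdot 2^{\lceil\log_2 k\rceil}$ satisfying assignments whenever $C'$ has $m$, so $C$ has at least $k$ satisfying assignments if and only if $C'$ is satisfiable. By parsimoniousness, $\tilde{\mathcal{G}}_C$ has at least $k$ distinct $\alpha$-PNE if and only if $C'$ is a yes-instance of \textsc{Circuit Satisfiability}, giving NP-hardness of the at-least-$k$ problem.

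The main technical content is already contained in \cref{th:bb_hardness}, where it was shown that every $\alpha$-PNE of $\tilde{\mathcal{G}}_C$ has a specific canonical form (input players encoding a satisfying assignment of $C$, gate players following the NAND semantics, the output player on her one strategy, and each player of $\tilde{\mathcal{G}}_\gamma$ on her dummy strategy), so that distinct satisfying assignments yield distinct PNE and vice versa. For that reason I expect no substantive technical obstacle here; the only mildly new ingredient is the padding amplification in the second bullet, which is entirely routine.
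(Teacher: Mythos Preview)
Your proposal is correct and follows essentially the same approach as the paper: both bullets are derived from the parsimonious reduction of \cref{th:bb_hardness,th:hardness_existence}, the first via the \#P-hardness of \#\textsc{Circuit-Sat}, and the second via padding the circuit with dummy input variables so that ``at least $k$ satisfying assignments'' becomes equivalent to ordinary satisfiability. The paper's proof is terser (it simply remarks that adding dummy variables makes the ``at least $k$'' variant NP-complete), but the content is the same.
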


\begin{proof}
The hardness of the first problem comes from the {\#}P-hardness of the counting version of \textsc{Circuit Satisfiability} (see, e.g., \citep[Ch.~18]{PapadimitriouComplexity}). For the hardness of the second problem, it is immediate to see that the following problem is NP-complete, for any fixed integer $k\geq 1$: given a circuit $C$, decide whether there are at least $k$ distinct satisfying assignments for $C$ (simply add ``dummy'' variables to the description of the circuit).
\end{proof}

\section{General Cost Functions} \label{sec:general_costs}

In this final section we leave the domain of polynomial latencies and study the
existence of approximate equilibria in general congestion games having arbitrary
(nondecreasing) cost functions. Our parameter of interest, with respect to which
both our positive and negative results are going to be stated, is the number
of players $n$. We start by showing that $n$-PNE always exist:

\begin{theorem}
\label{th:existence_general_n}
Every weighted congestion game with $n$ players and arbitrary (nondecreasing) cost functions has an $n$-PNE.
\end{theorem}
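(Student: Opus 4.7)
The plan is to produce a canonical ``social-cost minimizer'' profile and show it must be an $n$-PNE. Specifically, I would let $\vecc{s}^\ast$ be any minimizer of $\Phi(\vecc{s}) := \sum_{j \in N} C_j(\vecc{s})$ over the (finite) strategy space, which exists by finiteness. Note that this is \emph{not} a Rosenthal-type exact potential (none exists for weighted games with arbitrary costs), but a simpler unweighted sum of player costs; its global minimality will drive everything.

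Suppose for contradiction that $\vecc{s}^\ast$ is not an $n$-PNE: some player $i$ has an $n$-improving deviation $s_i'$, producing $\vecc{s}' := (s_i',\vecc{s}^\ast_{-i})$ with $C_i(\vecc{s}^\ast) > n\,C_i(\vecc{s}')$. Minimality gives $\Phi(\vecc{s}') \geq \Phi(\vecc{s}^\ast)$, which isolates the $i$-term and rearranges into
\[
\sum_{j \neq i}\bigl[C_j(\vecc{s}') - C_j(\vecc{s}^\ast)\bigr] \;\geq\; C_i(\vecc{s}^\ast) - C_i(\vecc{s}').
\]

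The heart of the argument, and the step I expect to be most delicate, is an upper bound on the left-hand side. Since each $c_e$ is nondecreasing, the only load changes between $\vecc{s}^\ast$ and $\vecc{s}'$ come from resources where $i$ either joins (those in $s_i' \setminus s_i^\ast$) or leaves (those in $s_i^\ast \setminus s_i'$), and the ``leaves'' terms contribute nonpositively to each $C_j(\vecc{s}') - C_j(\vecc{s}^\ast)$. Hence for each $j \neq i$, the increase is bounded by the sum of $c_e(x_e(\vecc{s}'))$ over $e \in s_j^\ast \cap (s_i' \setminus s_i^\ast)$. Swapping the order of summation, each such $e$ appears in at most $|\{j \neq i : e \in s_j^\ast\}| \leq n-1$ of these terms (using that $i \notin \{j : e \in s_j^\ast\}$), and $\sum_{e \in s_i' \setminus s_i^\ast} c_e(x_e(\vecc{s}')) \leq C_i(\vecc{s}')$. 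So the full left-hand side is at most $(n-1)\,C_i(\vecc{s}')$.

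Combining the two displayed inequalities yields $n\,C_i(\vecc{s}') \geq C_i(\vecc{s}^\ast)$, contradicting the strict $n$-improvement hypothesis. The tightness of the counting (exactly $n-1$ non-deviating players per resource, matched by a single additive unit to reach $n$) is what makes the social-cost potential the right choice; a weighted-sum or max-cost potential would lose the crucial factor precisely at this step and fail to close the argument without slack.
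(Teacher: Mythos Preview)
Your proposal is correct and follows essentially the same approach as the paper: take a global minimizer of the social cost $\Phi(\vecc{s})=\sum_j C_j(\vecc{s})$, then show that any $n$-improving deviation would strictly decrease $\Phi$ by bounding $\sum_{j\neq i}[C_j(\vecc{s}')-C_j(\vecc{s}^\ast)]\leq (n-1)C_i(\vecc{s}')$ via the ``joins'' vs.\ ``leaves'' decomposition. The only cosmetic difference is that the paper bounds each individual term $C_j(\vecc{s}')-C_j(\vecc{s}^\ast)\leq C_i(\vecc{s}')$ and then sums, whereas you swap the order of summation first and count; both routes yield the same inequality.
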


\begin{proof}
Fix a weighted congestion game with $n\geq 2$ players, some strategy profile $\vecc{s}$, and a possible deviation $s'_i$ of player $i$. First notice that we can bound the change in the cost of any other player $j\neq i$ as

\begin{align}
C_{j}(s'_i,\vecc s_{-i}) - C_{j}(\vecc s)
&= \sum_{e\in s_j} c_e\parens{x_e(s'_i,\vecc s_{-i})} - \sum_{e\in s_j} c_e \parens{x_e(\vecc s)} \nonumber\\
&\!\begin{multlined}[b]
     = \sum_{e\in s_j\inters(s'_i\setminus s_i)} \left[c_e \parens{x_e(s'_i,\vecc s_{-i})} - c_e \parens{x_e(\vecc s)}\right] \\
          +\sum_{e\in s_j\inters(s_i\setminus s'_i)} \left[c_e \parens{x_e(s'_i,\vecc s_{-i})} - c_e \parens{x_e(\vecc s)}\right]
     \end{multlined} \label{eq:helper_3}\\
& \leq \sum_{e\in s_j\inters(s'_i\setminus s_i)} \left[c_e \parens{x_e(s'_i,\vecc s_{-i})} - c_e \parens{x_e(\vecc s)}\right]\nonumber \\
&\leq \sum_{e\in s'_i} c_e \parens{x_e(s'_i,\vecc s_{-i})}\nonumber \\
&= C_i(s'_i,\vecc s_{-i}), \label{eq:helper_2}
\end{align}
the first inequality holding due to the fact that the second sum in~\eqref{eq:helper_3} contains only nonpositive terms (since the latency functions are nondecreasing).

Next, define the social cost $C(\vecc{s})=\sum_{i\in N}C_i(\vecc{s})$. Adding the above inequality over all players $j\neq i$ (of which there are $n-1$) and rearranging, we successively derive:

\begin{align}
  \sum_{j\neq i}C_{j}(s'_i,\vecc s_{-i}) - \sum_{j\neq i}C_{j}(\vecc s) &\leq (n-1)C_i(s'_i,\vecc s_{-i}) \nonumber\\
  \left(C(s'_i,\vecc s_{-i})-C_i(s'_i,\vecc s_{-i})\right)-\left(C(\vecc{s})-C_i(\vecc{s})\right) &\leq (n-1)C_i(s'_i,\vecc s_{-i}) \nonumber\\
  \qquad C(s'_i,\vecc s_{-i})-C(\vecc{s}) &\leq nC_i(s'_i,\vecc s_{-i})-C_i(\vecc{s}).\label{eq:napxpotential}
\end{align}
We conclude that, if $s'_i$ is an $n$-improving deviation for player $i$ (i.e., $nC_i(s'_i,\vecc s_{-i})<C_i(\vecc{s})$), then the social cost must strictly decrease after this move. Thus, any (global or local) minimizer of the social cost must be an $n$-PNE (the existence of such a minimizer is guaranteed by the fact that the strategy spaces are finite).
\end{proof}

The above proof not only establishes the existence of $n$-PNE in
general congestion games, but also highlights a few additional interesting features.
First, we mention that for polynomial cost functions of degree at most $d$, it was shown in \cite{Caragiannis:2019aa},\cite{CARAGIANNIS2021} that the (weighted) social cost decreases at every $(d+1)$-improving move; here we have shown, in a similar spirit, that for general cost functions the (unweighted) social cost decreases at every $n$-improving move.
Second, due to the key inequality~\eqref{eq:napxpotential}, $n$-PNE are reachable via
sequences of $n$-improving moves, in addition to arising also as minimizers of the
social cost function. These attributes give a nice ``constructive'' flavour
to~\cref{th:existence_general_n}.
Third, exactly because social cost optima are $n$-PNE, the \emph{Price of
Stability}\footnote{The Price of Stability (PoS) is a well-established and
extensively studied notion in algorithmic game theory, originally studied
in~\citep{ADKTWR04,Correa2004}. It captures the minimum approximation ratio of the
social cost between equilibria and the optimal solution (see, e.g.,
\citep{Christodoulou2015,cggs2018-journal}); in other words, it is the best-case
analogue of the the Price of Anarchy (PoA) notion of~\citet{Koutsoupias2009a}.} of
$n$-PNE is optimal (i.e., equal to $1$) as well.
Another, more succinct way, to interpret these observations is within the context of
\emph{approximate potentials} (see, e.g.,
\citep{Chen2008,Christodoulou2011a,cggs2018-journal}); \eqref{eq:napxpotential}
establishes that the social cost itself is always an $n$-approximate potential of
any congestion game.

Next, we design a family of games that do not admit $\varTheta\left(\frac{n}{\ln
n}\right)$-PNE, thus nearly matching the upper bound~\cref{th:existence_general_n}.

\begin{theorem}
\label{th:nonexistence_general_costs}
For any integer $n\geq 2$, there exist weighted congestion games with $n$ players and general (nondecreasing) cost functions that do not have $\alpha$-PNE for any $\alpha<\Phi_{n-1}$, where
$\Phi_m\sim\frac{m}{\ln m}$ is the unique positive solution of $(x+1)^m=x^{m+1}$.
\end{theorem}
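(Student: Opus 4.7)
The plan is to construct, for each integer $n\geq 2$, an explicit $n$-player weighted congestion game $\mathcal{G}_n$ whose cost functions are step functions with a single breakpoint and which fails to admit an $\alpha$-PNE for every $\alpha<\Phi_{n-1}$. Structurally, I would follow the template of~\cref{th:nonexistence}: one \emph{heavy} player of weight $1$ together with $n-1$ \emph{light} players of a common weight $w\in (0,1)$, two parallel families of resources with mirrored cost functions, and a pair of strategies per player organized with the same matching symmetry as in~\cref{fig:nonexistence}. This symmetry lets us reduce the equilibrium analysis to two cases (heavy player alone on the ``contested'' resource; heavy player sharing the contested resource with at least one light player) and yields two concrete improvement-ratio expressions whose simultaneous supremum will be the target bound.

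The key departure from the polynomial construction is that I now get to choose single-breakpoint step functions rather than monomials. The natural choice is to place the single breakpoint of each cost function precisely at a carefully tuned weight threshold (e.g.\ at $1$, so that the cost jumps exactly when either the heavy player or an additional light player joins). Denoting the two ``jump values'' by $V_0$ and $V_1$ (one per resource type, playing the role of $c_0,c_1$ in~\cref{th:nonexistence}), both case-analysis ratios become explicit functions of $(n,w,V_0,V_1)$, and the whole question reduces to a parametric optimization of the same flavour as~\eqref{eq:nonexist_ratio_opt}, but freed from the polynomial-degree constraint.

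The crucial step is to show that this optimization attains $\Phi_{n-1}$ asymptotically. The natural guess is to pick $w$ so that the $n-1$ light players together with the heavy player generate a load of $\Phi_{n-1}+1$ against a relevant base load of $\Phi_{n-1}$, and then balance $V_0$ against $V_1$ against these two loads. Substituting the defining identity $(\Phi_{n-1}+1)^{n-1}=\Phi_{n-1}^{n}$ should then collapse both improvement ratios to the same value $\Phi_{n-1}$, and the claimed $\Phi_{n-1}\sim \frac{n}{\ln n}$ follows from the estimate recalled at the end of~\cref{sec:model}.

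The main obstacle is precisely this parameter balancing. In the polynomial case the non-linearity was captured by the single parameter $d$; here we must tune the step heights jointly with $w$ so that the two case-analysis bounds coincide at $\Phi_{n-1}$, while keeping $V_0,V_1$ strictly positive and finite for every $n$. Verifying that the coincidence happens exactly at the claimed bound (as opposed to at some smaller constant, which is what a naive symmetric choice would give) is the delicate point, and it is here that using general nondecreasing costs—rather than just polynomials—pays off, by giving enough freedom to equalize the two ratios asymptotically.
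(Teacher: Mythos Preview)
Your plan has a genuine structural gap: the ``one heavy player plus $n-1$ identical light players with two resource types $c_0,c_1$'' template from \cref{th:nonexistence} cannot reach $\Phi_{n-1}$ with step functions; it caps out at $\Theta(\sqrt{n})$. Concretely, the best choice of breakpoints is to place the $c_0$-jump in $((n-1)w,\,1]$ and the $c_1$-jump in $(1,\,1+w]$. Then the two case ratios become
\[
\text{Case 1:}\quad \frac{V_0+(n-1)V_1}{V_0}=1+\frac{n-1}{r},
\qquad
\text{Case 2:}\quad \frac{V_0}{V_1}=r,
\]
where $r=V_0/V_1$. Equalizing gives $r^2-r=n-1$, i.e.\ $r=\tfrac{1+\sqrt{4n-3}}{2}=\Theta(\sqrt{n})$. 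No alternative placement of the two breakpoints improves this: the ratios are \emph{rational} functions of $V_0,V_1$ of degree one, so there is simply no mechanism by which the defining identity $(x+1)^{n-1}=x^{n}$ could ever appear. Your sentence ``substituting the defining identity \dots\ should then collapse both improvement ratios to $\Phi_{n-1}$'' is where the argument fails --- nothing in the two-type structure produces an $(n-1)$-st or $n$-th power.

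The paper's construction departs from the \cref{th:nonexistence} template in three essential ways. First, all $n$ players have \emph{distinct} weights $w_i=2^{-i}$, so that any proper subset of the small players weighs less than player~$0$. Second, there are $n$ different cost functions, one per index $i$, whose jump values form a \emph{geometric} progression $\frac{1}{\xi}(1+\tfrac{1}{\xi})^{i-1}$ with $\xi=\Phi_{n-1}$, and whose breakpoints sit exactly at $w_0+w_i$. Third, the strategies are \emph{nested}: player $i$'s strategy touches resources $0,\ldots,i-1$ on one side and resource $i$ on the other. These three features combine so that, for the lowest-index player $j$ sharing with player~$0$, the incurred cost is the telescoping sum $1+\tfrac{1}{\xi}\sum_{i=1}^{j-1}(1+\tfrac{1}{\xi})^{i-1}=(1+\tfrac{1}{\xi})^{j-1}$, while her deviation costs $\tfrac{1}{\xi}(1+\tfrac{1}{\xi})^{j-1}$; and player~$0$'s own ratio comes out to $(1+\tfrac{1}{\xi})^{n-1}=\xi$. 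It is precisely this telescoping, enabled by the nested strategies and graduated costs, that injects the exponent $n-1$ and makes $\Phi_{n-1}$ the answer.
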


\begin{proof}
For any integer $n\geq 2$, let
$\xi=\Phi_{n-1}$ be the positive solution of $(x+1)^{n-1}=x^{n}$. Then,
equivalently,
\begin{equation}
\label{eq:helper_5}
\left(1+\frac{1}{\xi}\right)^{n-1}=\xi.
\end{equation}
Furthermore, as we
mentioned in~\cref{sec:model}, $\xi>1$ and asymptotically $\Phi_{n-1}\sim\frac{n}{\ln n}$.

Consider the following congestion game $\mathcal{G}_n$.
There are $n=m+1$ players $0,1, \ldots, m$, where player $i$ has weight $w_i=\nicefrac{1}{2^i}$. In particular, this means that for any $i \in \set{1,\dots,m}$:
\begin{equation}
  \sum_{k=i}^m w_k< w_{i-1}\leq w_0.
\label{eq:weights}
\end{equation}

Furthermore, there are $2(m+1)$ resources $a_0, a_1, \ldots, a_m, b_0, b_1, \ldots, b_m$, where resources $a_i$ and $b_i$ have the same cost function $c_i$ given by
\[
  c_{a_0}(x)=c_{b_0}(x)=c_0(x)=
  \begin{cases}
     1,   & \text{if }  x\geq w_0, \\
     0,  & otherwise;
  \end{cases}
\]
and for all $i \in \set{1,\dots,m}$,\\
\[
  c_{a_i}(x)=c_{b_i}(x)=c_i(x)=
  \begin{cases}
    \frac{1}{\xi}\left(1+\frac{1}{\xi}\right)^{i-1},  & \text{if }  x\geq w_0 + w_i, \\
    0,  & otherwise.
  \end{cases}
\]
The strategy set of player $0$ and of all players $i \in \set{1,\dots,m}$ are, respectively,
\[
  S_0=\{\{a_0, \ldots, a_m\},\{b_0, \ldots, b_m\}\},
  \qquad \text{and} \qquad
  S_i=\{\{a_0, \ldots, a_{i-1}, b_i\},\{b_0, \ldots, b_{i-1}, a_i\}\}.
\]
We show that this game has no $\alpha$-PNE, for any $\alpha<\xi$, by proving that in any outcome there is at least one player that can
deviate and improve her cost by a factor of at least $\xi$. Due to symmetry it is
sufficient to consider the following two kinds of outcomes:

\emph{\underline{Case 1:} Player $0$ is alone on resource $a_0$.}

Then player $0$ must have chosen $\{a_0, \ldots, a_m\}$, and all other players
$i \in \set{1,\dots,m}$ must have chosen strategy $\{b_0, \ldots, b_{i-1}, a_i\}$. In this
outcome, player $0$ has a cost of
\begin{align*}
  c_0(w_0)+ \sum_{i=1}^{m} c_i(w_0+w_i)
  = 1 + \frac{1}{\xi}  \sum_{i=1}^{m}\left(1+\frac{1}{\xi}\right)^{i-1}
  =\left(1+\frac{1}{\xi}\right)^{m}
  = \xi,
\end{align*}
where the last equality follows by the fact that $m=n-1$ and~\eqref{eq:helper_5}.
Deviating to $\{b_0, \ldots ,b_m\}$, player 0 would get a cost of
\[
   c_0(w_0+\ldots+ w_m)+ \sum_{i=1}^{m} c_i\left(w_0+\sum_{j=i+1}^m w_j\right) = 1+0,
\]
where we used $w_0+\sum_{j=i+1}^m w_j < w_0 + w_i$ (see \eqref{eq:weights}).

Thus player $0$ can improve by a factor of at least
$\xi$.

\emph{\underline{Case 2:} Player 0 is sharing resource $a_0$ with at least one other player.}

Let $j$ be the smallest index of such a player, i.e., player $j$ plays $\{a_0,
\ldots, a_{j-1}, b_j\}$ and all players $i \in \set{1,\dots,j-1}$ have chosen strategy $\{b_0,
\ldots, b_{i-1}, a_i\}$. In such a profile the cost of player $j$ is at least
\begin{align*}
  c_0(w_0+w_j)+ \sum_{i=1}^{j-1} c_i(w_0+w_i+w_j)
  = 1 + \frac{1}{\xi} \sum_{i=1}^{j-1} \left(1+\frac{1}{\xi}\right)^{i-1}
  =\left(1+\frac{1}{\xi}\right)^{j-1},
\end{align*}
while deviating to $j$'s other strategy would result in a cost of at most
\begin{align*}
  c_0\left(\sum_{i=1}^{m} w_i\right) +  \sum_{i=1}^{j-1} c_i\left(\sum_{k=i+1}^m w_k\right) + c_j\left(w_0+\sum_{k=j}^m w_k\right)
  = 0 + 0 +  \frac{1}{\xi}\left(1+\frac{1}{\xi}\right)^{j-1}.
\end{align*}
For the last equality we used $\sum_{i=1}^{m} w_i < w_0 $ and $\sum_{k=i+1}^m w_k < w_i $ from \eqref{eq:weights}.

Thus player $j$ can improve by a factor of at least
$\xi$.
\end{proof}

Similar to the spirit of the rest of our paper so far, we'd like to show an
NP-hardness result for deciding existence of $\alpha$-PNE for general games as well.
We do exactly that in the following theorem, where now $\alpha$ grows as
$\tilde{\varTheta}(n)$. Again, we use the circuit gadget and combine it with the game
from the previous nonexistence~\cref{th:nonexistence_general_costs}.
The main difference to the previous reductions is that our approximation bound $\alpha$ is not fixed, but depends on the number of players $n$.
On the
other hand we are not restricted to polynomial latencies, so we use step functions
having a single breakpoint.
 We want to emphasize that in the following theorem, $\varepsilon$ is fixed (i.e., not part of the input).

\begin{theorem}
\label{th:hardness_existence_general_alt} Let $\varepsilon>0$, and let $\tilde{\alpha}:\nbb_{\geq2} \map \R$ be any
sequence of reals such that $1\leq \tilde\alpha(n) <
\frac{\Phi_{n-1}}{1+\varepsilon}=\tilde{\varTheta}(n)$,
where $\Phi_m\sim\frac{m}{\ln m}$ is the unique positive solution of $(x+1)^m=x^{m+1}$.
Then, it is NP-hard to decide whether a (weighted) congestion game $\mathcal{G}$ has an $\tilde{\alpha}(n_{\mathcal{G}})$-PNE (where $n_{\mathcal{G}}$ is the number of players of $\mathcal{G}$).
 If in addition $\tilde{\alpha}$ is a polynomial-time computable real sequence (as defined in \cref{sec:model}), the aforementioned problem is NP-complete.
\end{theorem}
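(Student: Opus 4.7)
The plan is to mirror the reduction strategy of \cref{th:hardness_existence}, with the polynomial-degree parameter $d$ replaced by the number-of-players parameter $n$. Concretely, I would reduce from \textsc{Circuit Satisfiability}: given a Boolean circuit $C$ (transformed into canonical form as in \cref{sec:circuit}), I build a weighted congestion game $\tilde{\mathcal{G}}_C$ with $N$ players whose $\tilde\alpha(N)$-approximate PNE are in bijection with the satisfying assignments of $C$. Following the blueprint of \cref{th:bb_hardness}, the game $\tilde{\mathcal{G}}_C$ is assembled from two sub-gadgets: a circuit gadget like the one from \cref{sec:circuit}, and an instance $\mathcal{G}_m$ of the nonexistence family from \cref{th:nonexistence_general_costs}.

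For the circuit gadget I would simply reuse $\mathcal{G}^d_\mu$ from \cref{sec:circuit}, choosing its internal degree parameter $d=d(N)$ large enough that the locking threshold $3^{d/2}$ exceeds $\tilde\alpha(N)$; since by hypothesis $\tilde\alpha(N)=O(N/\ln N)$, the choice $d=O(\log N)$ suffices, and the whole gadget remains polynomial-sized (the cost coefficients $\mu^k$, with $k$ bounded by the number of gates and $\mu$ polynomial in $N$, have only polynomially many bits; and polynomial latencies are a special case of general nondecreasing latencies). \cref{lem:circuit_NANDsemantics} then directly gives the NAND-locking property at approximation factor $\tilde\alpha(N)$. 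I would then glue this gadget to $\mathcal{G}_m$ exactly as in \cref{th:bb_hardness}: augment the output player $G_1$'s one-strategy by all resources of $\mathcal{G}_m$, give every $\mathcal{G}_m$-player a new dummy strategy of small constant cost, and rescale $\mathcal{G}_m$'s weights and step-function breakpoints by a small $\gamma>0$ so that its weights are negligible compared to the unit weights of the circuit players (this $\gamma$-rescaling preserves the Nash dynamics of $\mathcal{G}_m$, hence its nonexistence property). The case analysis of \cref{th:bb_hardness} then carries over almost verbatim, establishing the bijection between the satisfying assignments of $C$ and the $\tilde\alpha(N)$-PNE of $\tilde{\mathcal{G}}_C$ (all of which are, in fact, exact PNE).

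The main technical step, and the principal obstacle, is the calibration of $m$. Letting $n_c+K+1$ denote the number of players produced by the circuit gadget, the total player count in $\tilde{\mathcal{G}}_C$ is $N=n_c+K+1+m$, and I need $\Phi_{m-1}>\tilde\alpha(N)$ so that the nonexistence subgame $\mathcal{G}_m$ cannot be internally stabilized. Since $\tilde\alpha(N)<\Phi_{N-1}/(1+\varepsilon)$ by hypothesis and $\Phi_n\sim n/\ln n$ is slowly growing, it suffices to take $m$ equal to a sufficiently large constant (depending only on $\varepsilon$) times $n_c+K+1$, so that $m/N>1/(1+\varepsilon/2)$; a short asymptotic computation using $\Phi_n\sim n/\ln n$ then yields $\Phi_{m-1}\ge \Phi_{N-1}/(1+\varepsilon)>\tilde\alpha(N)$ for all sufficiently large circuits, while the finitely many exceptional small-circuit cases can be hardcoded into the reduction. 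This keeps $N$ linear in $|C|$ and the reduction polynomial time. Finally, NP-completeness under polynomial-time computability of $\tilde\alpha$ follows from the standard $\tilde\alpha(N)$-PNE verification argument used at the end of \cref{th:hardness_existence}.
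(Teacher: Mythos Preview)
Your high-level plan is right and matches the paper: reuse the circuit gadget $\mathcal{G}^d_\mu$ with $d$ chosen roughly logarithmic in the target number of players so that the locking threshold $3^{d/2}$ beats $\tilde\alpha(N)$; attach the nonexistence game $\mathcal{G}_m$ from \cref{th:nonexistence_general_costs}; and calibrate $m$ to be a fixed constant multiple of the circuit-gadget size so that $\Phi_{m-1}>\tilde\alpha(N)$ via the asymptotics $\Phi_n\sim n/\ln n$. That calibration argument is essentially the one the paper carries out.

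The gap is in the gluing. You propose to copy \cref{th:bb_hardness} verbatim: put $G_1$'s one-strategy on \emph{all} $\mathcal{G}_m$-resources and give each $\mathcal{G}_m$-player a dummy strategy of small constant cost. But the ``output absent $\Rightarrow$ dummy is $\alpha$-dominated'' step in \cref{th:bb_hardness} hinges on the polynomial rescaling, which makes $\tilde c_{\max}=\gamma^{d+1}c_{\max}$ shrink faster than $\tilde a_{\min}\ge\gamma^d a_{\min}$ so that a single constant dummy cost can sit strictly between them. With step functions, rescaling weights and breakpoints by $\gamma$ leaves the cost \emph{values} unchanged, so no such separation exists. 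Concretely, in $\mathcal{G}_m$ the step functions all vanish whenever the heavy player~$0$ is absent (the remaining weights sum to less than $w_0$). Hence, with the output player on her zero strategy and the dummy cost $c_d$ ``small'' (say $c_d\le\alpha$), the profile ``player~$0$ on dummy, players $1,\dots,m$ on any non-dummy strategy'' has everyone at cost~$0$ except player~$0$ at cost $c_d$, and player~$0$'s only deviations cost at least~$1$; this \emph{is} an $\alpha$-PNE, so the NO-case of the reduction collapses. Pushing $c_d$ above $\alpha\xi$ to kill this profile in turn breaks the YES-case, since then player~$1$'s deviation cost $1+1/\xi<\xi<c_d/\alpha$ makes dummy $\alpha$-improvable even when $G_1$ sits on all $\mathcal{G}_m$-resources.

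The paper fixes this by changing the coupling mechanism rather than the analysis: it negates the circuit (so ``satisfiable'' corresponds to $G_1$ playing her \emph{zero} strategy), and it augments $G_1$'s one-strategy with the \emph{dummy} resource only, giving the dummy a step-function cost that is $0$ below load~$1$ and $\Phi_{n-1}^2$ at load~$\ge 1$. Since the $\mathcal{G}_m$-weights are scaled below~$1$, the dummy is free exactly when $G_1$ is absent and prohibitively expensive (by a factor $>\Phi_{n-1}$ over any $\mathcal{G}_m$-cost) when $G_1$ is present. This is the step-function-specific idea that replaces the $\gamma$-scaling trick, and it is what your proposal is missing.
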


\begin{proof}
Recall that we have $\Phi_{n-1}\sim\frac{n}{\ln n}$. Given $\varepsilon>0$, without loss of generality assume $\varepsilon<1$, so that $1+\nicefrac{\varepsilon}{3}<(1+\varepsilon)(1-\nicefrac{\varepsilon}{3})$. Let $n_0,\ell$ be large enough natural numbers such that

\begin{equation}
  1+\frac{1}{\ell}<\frac{(1+\varepsilon)(1-\frac{\varepsilon}{3})}{1+\frac{\varepsilon}{3}}
  \qquad\text{and}\qquad
  \left(1-\frac{\varepsilon}{3}\right)\frac{n}{\ln n}\leq\Phi_{n-1}\leq\left(1+\frac{\varepsilon}{3}\right)\frac{n}{\ln n}
  \quad\text{for all}\;\;n\geq n_0.
  \label{eq:hardness_existence_general_aux1}
\end{equation}

We will again reduce from \textsc{Circuit Satisfiability}: given a circuit $C$, we must construct (in polynomial time) a game $\tilde{\mathcal{G}}$, say with $\tilde{n}$ players, that has an $\tilde{\alpha}(\tilde{n})$-PNE if and only if $C$ has a satisfying assignment. Without loss of generality assume that $C$ is in canonical form (as described in \cref{sec:circuit}); add also one extra gate that negates the output of $C$, making this the new output of a circuit $\bar{C}$, say with $m$ inputs and $K$ NAND gates.
Let $s= m+K+1$, $n= \ell s$, and take a large enough integer $d$ such that $3^{\nicefrac{d}{2}}>\Phi_{n-1}$. Note that $s$, $n$ and a suitable $d$ can all be found in time polynomial in the description of $C$. To conclude the preliminaries of this proof, assume also without loss of generality that $s\geq n_0$; if $s$ is bounded by a constant, determining whether $C$ has a satisfying assignment can be done in constant time.

Next, given $\bar{C}$ and $d$, construct the game $\mathcal{G}^d_\mu$ where $\mu$ is such that $3^{\nicefrac{d}{2}}-\varepsilon(\mu)>\Phi_{n-1}$, as in \cref{sec:circuit}. Notice that $\mathcal{G}^d_\mu$ can be computed in polynomial time from $C$, and that the $\Phi_{n-1}$-improving Nash dynamics of this game emulate the computation of the circuit. Consider also the game $\mathcal{G}_n$ with $n$ players from \cref{th:nonexistence_general_costs} that does not have $\alpha$-PNE for any $\alpha<\Phi_{n-1}$.

We would like to merge $\mathcal{G}^d_\mu$ and $\mathcal{G}_n$ into a single game $\tilde{\mathcal{G}}$, in such a way that $\tilde{\mathcal{G}}$ has an approximate PNE if and only if $C$ has a satisfying assignment. Following the same technique as in \cref{th:bb_hardness}, we would like to extend the strategies of the output player of $\mathcal{G}^d_\mu$ to include resources that are used by players in $\mathcal{G}_n$. For this technique to work, we must rescale the weights and cost functions in $\mathcal{G}_n$.
In particular, we divide all weights of the players in $\mathcal{G}_n$ by 2 (so that the sum of the weights of all the players is less than 1) and halve the breakpoints of the cost functions accordingly. We also add a new dummy resource with cost function
\[ c_{\text{dummy}}(x)=
  \begin{cases}
     \Phi_{n-1}^2,  & \text{if}\;\;  x\geq 1, \\
     0,  & \text{otherwise};
 \end{cases}
\]

We are now ready to describe the congestion game $\tilde{\mathcal{G}}$ that is obtained by merging the circuit game $\mathcal{G}^d_\mu$ with the (rescaling of) game $\mathcal{G}_n$. Note that this game has $n+s=(\ell+1) s$ players: $s$ from the circuit game (which all have weight 1) and $n$ from the nonexistence gadget. The set of resources corresponds to the union of the gate resources of $\mathcal{G}^d_\mu$, the resources in $\mathcal{G}_n$, and the dummy resource. Similarly to the proof of \cref{th:bb_hardness},

\begin{itemize}
\item we do not change the strategies of the players in $\mathcal{G}^d_\mu$, with the exception of the output player $G_1$;
\item the zero strategy of the output player $G_1$ remains the same as in $\mathcal{G}^d_\mu$, but her one strategy is augmented with the dummy resource; that is, $s^1_{G_1}=\{1_1,\text{dummy}\}$;
\item each player $i$ in $\mathcal{G}_n$ keeps her original strategies, and gets a new dummy strategy $s_{i,\text{dummy}}=\{\text{dummy}\}$.
\end{itemize}

With the above description,\footnote{This almost concludes the description of the game -- the only problem is that some of the cost functions of the game are defined in terms of $\Phi_{n-1}$, which is not a rational number. To make the proof formally correct, one can approximate $\Phi_{n-1}$ sufficiently close by a rational $\bar{\Phi}_{n-1}<\Phi_{n-1}$.
More details can be found in \Cref{appendix:rationalize}.}
the only thing left to prove NP-hardness is that $C$ has a satisfying assignment if and only if $\tilde{\mathcal{G}}$ has an $\tilde{\alpha}(n+s)$-PNE.
The proof follows the same approach as in \cref{th:bb_hardness}. Letting $\alpha<\Phi_{n-1}$, we suppose that $\tilde{\mathcal{G}}$ has an $\alpha$-PNE, say $\vecc{s}$, and proceed to prove that $C$ has a satisfying assignment.

As before, if $\vecc{s}$ is an $\alpha$-PNE, then every gate player that is not the output player must respect the NAND semantics, and this strategy is $\alpha$-dominating.
For the output player, the cost of her zero strategy remains the same, and the cost of her one strategy increases by exactly $\Phi_{n-1}^2<3^d<\frac{\mu}{\mu-1}3^d$.
Hence, if $\vecc{s}_X$ is a satisfying assignment, then the zero strategy of the output player (which negates the output of the original circuit $C$) remains $\alpha$-dominating; on the other hand, if $\vecc{s}_X$ is not a satisfying assignment, then the ratio between the costs of the zero strategy and the one strategy of the output player is at least
$$\frac{c_{0_1}(2)}{c_{1_1}(2)+\Phi_{n-1}^2}>\frac{\lambda\mu 2^d}{\mu 2^d+\frac{\mu}{\mu-1}3^d}=\lambda\left(\frac{1}{1+\frac{1}{\mu-1}\left(\frac{3}{2}\right)^d}\right)
    > 3^{\nicefrac{d}{2}} \parens{\frac{1}{1 + \frac{1}{\mu - 1} 3^d } }> 3^{\nicefrac{d}{2}} - \varepsilon(\mu)>\alpha.$$
Hence, respecting the NAND semantics remains $\alpha$-dominating for the output player as well. As a consequence, the input players are also locked to their strategies (i.e.\ they have no incentive to change).

Now, if the output player happened to be playing her one strategy, this could not be an $\alpha$-PNE. For each of the players in $\mathcal{G}_n$, the dummy strategy would incur a cost of $\Phi_{n-1}^2$, whereas any other strategy would give a cost of at most $\Phi_{n-1}$. Thus the dummy strategy would be $\Phi_{n-1}$-dominated, and the players in $\mathcal{G}_n$ must be playing on their original sets of strategies, for which we know that $\alpha$-PNE do not occur.

The above argument proves that, in an $\alpha$-PNE, the output player must be playing her zero strategy. Since the output player, by construction, negates the output of $C$, this implies that $C$ must have a satisfying assignment. This also implies that the congestion on the dummy resource cannot reach the breakpoint of 1, and hence it would be $\alpha$-dominating for each of the players in $\mathcal{G}_n$ to play her dummy strategy (and incur a cost of 0). Thus, $\vecc{s}$ is an exact PNE as well.

For the converse direction, suppose $C$ has a satisfying assignment $\vecc{s}_X$. Then this can be extended to an $\alpha$-PNE of $\tilde{\mathcal{G}}$ in which the input players play according to $\vecc{s}_X$, the gate players play according to the NAND semantics, the output player of $\mathcal{G}^d_\mu$ plays the zero strategy, and each player in $\mathcal{G}_n$ plays the dummy strategy.

We have proven that, for any $\alpha<\Phi_{n-1}$, $C$ has a satisfying assignment iff $\tilde{\mathcal{G}}$ has an $\alpha$-PNE. To conclude the proof, we verify that $\tilde{\alpha}(n+s)<\Phi_{n-1}$:

\begin{align*}
  \tilde{\alpha}(n+s)&<\frac{\Phi_{n+s-1}}{1+\varepsilon}\\
  &\leq\frac{1+\frac{\varepsilon}{3}}{1+\varepsilon}\frac{n+s}{\ln(n+s)}\\
  &\leq\frac{1+\frac{\varepsilon}{3}}{(1+\varepsilon)(1-\frac{\varepsilon}{3})}\frac{(n+s)\ln n}{n\ln(n+s)}\Phi_{n-1}\\
  &<\frac{(1+ \frac{s}{n})(1+\frac{\varepsilon}{3})}{(1+\varepsilon)(1-\frac{\varepsilon}{3})}\Phi_{n-1}\\
  &=\frac{(1+ \frac{1}{\ell})(1+\frac{\varepsilon}{3})}{(1+\varepsilon)(1-\frac{\varepsilon}{3})}\Phi_{n-1}<\Phi_{n-1}.
\end{align*}
The first inequality comes from the assumption on $\tilde{\alpha}$, the second and third come from the upper and lower bounds on $\Phi_{n}$ from \eqref{eq:hardness_existence_general_aux1} and the fact that $n+s\geq n\geq s\geq n_0$, the fourth comes from the trivial bound $\ln n<\ln (n+s)$, the equality comes from the definition of $n=\ell s$, and the final step comes from the choice of $\ell$ in \eqref{eq:hardness_existence_general_aux1}.

We conclude that the problem of deciding whether a (weighted) congestion game with $n$ players has an $\tilde{\alpha}(n)$-PNE is NP-hard. If in addition $\tilde{\alpha}$ is a polynomial-time computable real sequence, the problem is also in NP; given a game with $n$ players and a (candidate) strategy profile, verify that this is an $\tilde{\alpha}(n)$-PNE by iterating over all possible moves of all players and verifying that none of these are $\tilde{\alpha}(n)$-improving.
\end{proof}

\section{Discussion and Future Directions} \label{sec:discussion_future}

In this paper we showed that weighted congestion games with polynomial latencies of degree
$d$ do not have $\alpha$-PNE for
$\alpha<\alpha(d)=\varOmega\left(\frac{\sqrt{d}}{\ln d}\right)$. For general cost
functions, we proved that $n$-PNE always exist whereas $\alpha$-PNE in general do
not, where $n$ is the number of players and
$\alpha<\Phi_{n-1}=\varTheta\left(\frac{n}{\ln n}\right)$. We also transformed the
nonexistence results into complexity-theoretic results, establishing that deciding
whether such $\alpha$-PNE exist is itself an NP-hard problem.

We now identify two possible directions for follow-up work. A first obvious question would be
to reduce the nonexistence gap between $\varOmega\left(\frac{\sqrt{d}}{\ln
d}\right)$ (derived in~\cref{th:nonexistence} of this paper) and $d$ (shown in~\cite{Caragiannis:2019aa}) for polynomials of degree $d$; similarly for the gap between
$\varTheta\left(\frac{n}{\ln n}\right)$ (\cref{th:nonexistence_general_costs}) and $n$ (\cref{th:existence_general_n}) for general cost functions and $n$
players. Notice that all current methods for proving upper bounds (i.e.,
existence) are essentially based on potential function arguments; thus it might be
necessary to come up with novel ideas and techniques to overcome the current gaps.

A second direction would be to study the complexity of \emph{finding} $\alpha$-PNE,
when they are guaranteed to exist. For example, for polynomials of degree $d$, we
know that $d$-improving dynamics eventually reach a
$d$-PNE~\cite{Caragiannis:2019aa}, and so finding such an approximate equilibrium
lies in the complexity class PLS of local search problems (see, e.g.,
\cite{Johnson:1988aa,Schaffer:1991aa}). However, from a complexity theory
perspective the only known lower bound is the PLS-completeness of finding an
\emph{exact} equilibrium for \emph{unweighted} congestion
games~\cite{Fabrikant2004a} (and this is true even for $d=1$, i.e., affine cost
functions; see~\cite{Ackermann2008}). On the other hand, we know that $d^{O(d)}$-PNE
can be computed in polynomial time (see, e.g.,
\cite{Caragiannis2015a,gns2018_arxiv,Feldotto2017}). It would be then very
interesting to establish a ``gradation'' in complexity (e.g., from NP-hardness to
PLS-hardness to P) as the parameter $\alpha$ increases from $1$ to $d^{O(d)}$.

\bibliography{hardness_equilibria}

\appendix

\section{Technical Lemmas}

\begin{lemma}\label{lem:nonexist_ratio_2_global}
For any integer $d\geq2$ define the sequence
\[ g(d)=\left(1+d^{-\frac{1}{2(k_d+1)}}\right)^{k_d}+\frac{2\sqrt{d}}{\ln
d\left(1+\frac{\ln d}{2d}\right)^d}
\qquad\text{where}\;\; k_d=\left\lceil\frac{\ln d}{2\ln\ln d}\right\rceil .
\]
Then $\lim_{d\rightarrow\infty}g(d)=1$.
\end{lemma}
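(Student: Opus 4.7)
\medskip
\noindent\textbf{Proof Proposal.}
The plan is to split $g(d) = A(d) + B(d)$, where
\[
A(d)=\left(1+d^{-\frac{1}{2(k_d+1)}}\right)^{k_d} \qquad\text{and}\qquad B(d)=\frac{2\sqrt d}{\ln d\,\left(1+\frac{\ln d}{2d}\right)^d},
\]
and to show independently that $B(d)\to 0$ and $A(d)\to 1$ as $d\to\infty$.

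For $B(d)$, the key observation is that $\left(1+\frac{\ln d}{2d}\right)^d$ is asymptotically $\sqrt d$. Indeed, a Taylor expansion gives
\[
d\ln\!\left(1+\frac{\ln d}{2d}\right)=\frac{\ln d}{2}-\frac{(\ln d)^2}{8d}+O\!\left(\frac{(\ln d)^3}{d^2}\right),
\]
so $\left(1+\frac{\ln d}{2d}\right)^d=\sqrt d\cdot(1+o(1))$. Substituting into $B(d)$ yields $B(d)=\frac{2}{\ln d}(1+o(1))\to 0$, which is the routine (and easy) half of the argument.

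For $A(d)$, the plan is to pass to logarithms and bound $\ln A(d)=k_d\,\ln\!\left(1+d^{-1/(2(k_d+1))}\right)\le k_d\cdot d^{-1/(2(k_d+1))}$ via the inequality $\ln(1+x)\le x$. It then suffices to show this last product tends to $0$. First I would pin down the asymptotics of the base: from $k_d=\lceil\frac{\ln d}{2\ln\ln d}\rceil$ one gets $\frac{\ln d}{2(k_d+1)}=\ln\ln d\,(1+o(1))$, hence $d^{-1/(2(k_d+1))}=\frac{1}{\ln d}\,(1+o(1))$. Combining this with $k_d=\Theta(\ln d/\ln\ln d)$ gives
\[
k_d\cdot d^{-1/(2(k_d+1))}=\Theta\!\left(\frac{1}{\ln\ln d}\right)\longrightarrow 0,
\]
so $\ln A(d)\to 0$ and $A(d)\to 1$.

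The only mildly delicate point is controlling the ceiling function in $k_d$ uniformly: I would do this by bracketing $\frac{\ln d}{2\ln\ln d}\le k_d\le\frac{\ln d}{2\ln\ln d}+1$, which yields matching upper and lower bounds of the form $\ln\ln d\cdot(1\pm o(1))$ on $\frac{\ln d}{2(k_d+1)}$. This is the step most prone to bookkeeping errors, but it is elementary; once it is in place, putting the two pieces together gives $\lim_{d\to\infty}g(d)=1+0=1$, as required.
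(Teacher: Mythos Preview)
Your proposal is correct and follows essentially the same route as the paper: the same decomposition $g=g_1+g_2$, the same plan of showing $g_1\to 1$ via $\ln(1+x)\le x$ (equivalently $(1+x/y)^y<e^x$) together with the asymptotics of $k_d$, and $g_2\to 0$ via the growth of $\left(1+\tfrac{\ln d}{2d}\right)^d$. The only cosmetic difference is that you use a Taylor expansion for the latter where the paper invokes the standard two-sided bound $e^{xy/(x+y)}<(1+x/y)^y<e^x$.
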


\begin{proof} Define
\[g_1(d) = \parens{1 + d^{- \frac{1}{2 (k_d + 1)}} }^{k_d}
\quad \text{ and } \quad g_2 (d) = \frac{2 \sqrt{d}}{\parens{1 + \frac{\ln d}{2
d}}^d \ln d} \] so that $g(d) = g_1(d) + g_2(d)$. We will show the desired
convergence by establishing that $\lim_{d\to\infty} g_1(d)=1$ and $\lim_{d\to\infty}
g_2(d)=0$.
We will make use of the following inequalities (see, e.g., \citep[Eq.~4.5.13]{Olver2010}):
\begin{equation}
  \exp \parens{\frac{xy}{x+y}} < \parens{1 + \frac{x}{y}}^y < \exp (x),\quad\text{for all }x,y>0.
  \label{ineq:bound_e}
\end{equation}

First, we show $\lim_{d \rightarrow \infty} g_1(d) = 1$. As $d$ and $k_d$ are
positive, we have $g_1(d) > 1$ for every $d$. Furthermore, $g_1$ is increasing in
$k_d$, and $k_d < \frac{\ln d}{2 \ln \ln d} + 1$. Thus,
\begin{equation*} g_1(d) <  \parens{1 + d ^{- \frac{1}{ \frac{\ln d}{\ln \ln d} +
  4} } }^{ \frac{\ln d}{2 \ln \ln d} + 1}.
\end{equation*}
Using the second inequality of \eqref{ineq:bound_e} with $y = \frac{\ln d}{2 \ln \ln d} + 1$ and $x = y d^{-\frac{\ln \ln d}{\ln d + 4 \ln \ln d}}$ , we can further bound
\begin{equation} g_1 (d) < \exp \parens{ \frac{\frac{\ln d}{2 \ln \ln d} + 1 }{
  d^{\frac{\ln \ln d}{\ln d + 4 \ln \ln d}} } }.
\label{eq:ub_g1}
\end{equation}
We will show that the argument of the exponential function on the r.h.s.\
of~\eqref{eq:ub_g1} goes to $0$ for $d \rightarrow \infty$, thus proving the claim.
Replacing $ \ln d = \exp \parens{\ln \ln d}$ in the numerator and $d = \exp \parens{\ln d}$ in the
denominator, that argument can be written as
\begin{equation}
  \frac{\exp \parens{\ln \ln d} \parens{\frac{1}{2 \ln \ln d} + \frac{1}{\ln d}} }{ \exp
  \parens{\frac{\ln d \ln \ln d}{\ln d + 4 \ln \ln d}} }
  =
  \parens{\frac{1}{2 \ln \ln d} + \frac{1}{\ln d}} \exp \parens{ \frac{4 (\ln \ln
  d)^2}{\ln d + \ln \ln d} } .
\label{eq:ub_g1_arg}
\end{equation}
The argument of the exponential function on the r.h.s.\ of \eqref{eq:ub_g1_arg} goes
to $0$, as $\ln d$ is the dominating term in the denominator for $d \rightarrow
\infty$. Thus, the whole expression in \eqref{eq:ub_g1_arg} goes to $0$.

Next, we show that $\lim_{d \rightarrow \infty} g_2(d) = 0$. As $d \ge 2$, we have
that $g_2(d) > 0$ for every $d$. Using the first inequality of \eqref{ineq:bound_e} with $x =
\frac{\ln d}{2}$ and $y = d$, we have
\[
  \parens{1 + \frac{\ln d}{2d}}^d > \exp \parens{\frac{d \ln d}{\ln d + 2d}}.
\] Thus, we obtain an upper bound on $g_2$ by writing $\sqrt{d} =
\exp \parens{\frac{\ln d}{2}}$:
\begin{equation} g_2(d) <
  \frac{2 \exp \parens{\frac{\ln d}{2}}}{\ln d \exp \parens{\frac{d \ln d}{\ln d + 2d}} } =
  \frac{2}{\ln d} \exp \parens{\frac{(\ln d)^2}{2 \ln d + 4d}}.
  \label{eq:g2_ub}
\end{equation} As $4d$ is the dominating term in the denominator of the argument of
the exponential function, the argument goes to $0$ for $d \rightarrow \infty$, and
thus the r.h.s. of \eqref{eq:g2_ub} goes to $0$, showing the claim.
\end{proof}

\section{\texorpdfstring{Dealing with the irrationality of $\Phi_{n-1}$ in the proof of \Cref{th:hardness_existence_general_alt}}{Dealing with the irrationality}}
\label{appendix:rationalize}

To make the proof of \Cref{th:hardness_existence_general_alt} formally correct, we choose an integer under-approximation $\bar{\Phi}_{n-1} = \floor{ \frac{n}{\ln n} }$ of $\Phi_{n-1}$ and use $\bar{\Phi}_{n-1}$ instead of $\Phi_{n-1}$ in the construction of our reduction.

First, observe that \Cref{eq:hardness_existence_general_aux1} still holds for $\bar{\Phi}_{n-1}$, i.e., for large enough $n$ we have
\begin{equation}
  \left(1-\frac{\varepsilon}{3}\right)\frac{n}{\ln n} \leq \bar{\Phi}_{n-1} < \Phi_{n-1}\leq\left(1+\frac{\varepsilon}{3}\right)\frac{n}{\ln n}.
 \label{eq:bounds-barphi}
\end{equation}

For the circuit game $\mathcal{G}^d_\mu$, we can find an integer $d$ such that $3^{\nicefrac{d}{2}} - \bar{\Phi}_{n-1} > 1$ and $3^{d/2}> \left(1+\frac{1}{\bar{\Phi}_{n-1}}\right)^{n-1}$. Note that $d$ is logarithmic in $n$ as both $\bar{\Phi}_{n-1}$ and $\left(1+\frac{1}{\bar{\Phi}_{n-1}}\right)^{n-1}$ grow asymptotically as $\frac{n}{\ln n}$, $n$ respectively.
For $\mu$ we choose an integer such that $1 \ge \frac{3^{d + \nicefrac{d}{2}}}{\mu - 1} = \varepsilon(\mu)$ (see \eqref{eq:epsilon_mu_def}); that is, $\mu \geq 1+3^{d+d/2}$.
Further, we can find a rational $\lambda$ such that
\begin{equation}
  \bar{\Phi}_{n-1}\left(1+\frac{1}{\mu-1}3^d\right)<\lambda<\frac{3^d}{\bar{\Phi}_{n-1}\left(1+\frac{1}{\mu-1}3^d\right)}.
  \label{eq:lambda_rational}
\end{equation}
From our choices of $d$ and $\mu$ we have $3^{\nicefrac{d}{2}}-\bar{\Phi}_{n-1} > \varepsilon(\mu)$ and $3^{\nicefrac{d}{2}} > \bar{\Phi}_{n-1} $, and thus the left hand side of \eqref{eq:lambda_rational} is strictly smaller than the right hand side.
Since the denominator and the numerator of both bounds are of size polynomial in $n$, the denominator and the numerator of $\lambda$ can then be chosen to be polynomial in $n$.
Therefore, suitable $d, \mu$ and $\lambda$ can all be found in time polynomial in $n$.

Hence the game $\mathcal{G}^d_\mu$ is of polynomial size with respect to $n$ and for any $\alpha < \bar{\Phi}_{n-1} < 3^{\nicefrac{d}{2}} - \varepsilon(\mu)$ there is a unique $\alpha$-PNE where the players emulate the computation of the circuit $\bar{C}$.

In the nonexistence game $\mathcal{G}_n$ of \Cref{th:nonexistence_general_costs}, we replace $\xi$ by $\bar{\Phi}_{n-1}$ in the cost functions, making them rational-valued.
Since $\bar{\Phi}_{n-1} < \left(1+\frac{1}{\bar{\Phi}_{n-1}}\right)^{n-1}$, the $\alpha$-improving dynamics are not changed by this replacement for any $\alpha<\bar{\Phi}_{n-1}$, so that the game $\mathcal{G}_n$ does not have $\alpha$-PNE.

Finally, we replace $\Phi^2_{n-1}$ by $\bar{\Phi}_{n-1}\left(1+\frac{1}{\bar{\Phi}_{n-1}}\right)^{n-1}$ in the cost function of the dummy resource.
Since $3^{d/2}> \left(1+\frac{1}{\bar{\Phi}_{n-1}}\right)^{n-1}$, we have $\bar{\Phi}_{n-1}\left(1+\frac{1}{\bar{\Phi}_{n-1}}\right)^{n-1} < 3^d < \frac{\mu}{\mu - 1} 3^d$, thus maintaining the behavior of the output player of the circuit game.
For the players in $\mathcal{G}_n$, observe that the cost of any strategy is upper bounded by $\left(1+\frac{1}{\bar{\Phi}_{n-1}}\right)^{n-1}$, and hence the dummy resource is $\bar{\Phi}_{n-1}$-dominated.
With the above modifications, all cost functions appearing in the construction are rational-valued, while maintaining the desired $\alpha$-dynamics.

To show that $\tilde{\alpha}(n+s) < \bar{\Phi}_{n-1}$, we use the same chain of inequalities as in the proof of \Cref{th:hardness_existence_general_alt} replacing $\Phi_{n-1}$ by $\bar{\Phi}_{n-1}$, since the lower bound on $\bar{\Phi}_{n-1}$ and the upper bound on $\Phi_{n+s-1}$ still hold (see \eqref{eq:bounds-barphi}).

\end{document}